     \def\section{\@startsection{section}{1}%
     \z@{.7\linespacing\@plus\linespacing}{.5\linespacing}%
     {\bfseries
     \centering
     }}
     \def\@secnumfont{\bfseries}
\newtheorem{theorem}{Theorem}[section]
\newtheorem{lemma}[theorem]{Lemma}
\newtheorem{proposition}[theorem]{Proposition}
\newtheorem{corollary}[theorem]{Corollary}
\theoremstyle{definition}
\newtheorem{definition}[theorem]{Definition}
\newtheorem{assumption}[theorem]{Assumption}
\theoremstyle{remark}
\newtheorem{remark}[theorem]{Remark}
\numberwithin{equation}{section} \setcounter{page}{1}
\renewcommand{\paragraph}[1]{{\bf #1.}}
\definecolor{myred}{rgb}{0.8,0,0}  
\noindent\textbf{Proof of {#1}:}}%
\def \R{\mathbb{R}}               
\def \N{\mathbb{N}}               
\def \1{{\bf 1}}                
\def \0{{\bf 0}}
\def \tr{^{\!\top}}             
\def \pini{\widetilde{p}}
\def \prini{\widetilde{\pr}}
\def \simplex{\mathcal{S}}
\def\eps{\varepsilon}
\def\epsmax{\overline{\varepsilon}}
\def\compactset{K}
\def\gammaI{{\gamma_I}}
\def\gammaN{{\gamma}}
\def\np{m}
\def \R{\mathbb{R}}               
\def \N{\mathbb{N}}               
\def \1{{\bf 1}}                
\def \0{{\bf 0}}
\def \tr{^{\!\top}}             
\def \pini{\widetilde{p}}
\def \pr{\pi}
\def \simplex{\mathcal{S}}
\def\eps{\varepsilon}
\def \piq{{\overline \pi}}
\def \tq{{\overline t}}
\def\epsmax{\overline{\varepsilon}}
\def\compactset{K}
\def\gammaI{\gamma_I}
\def\gammaN{\gamma}
\def\np{m}
\def \dist{\text{\rm dist}\,}
\newcommand{\norm}[1]{\left|{#1}\right|}
\newcommand{\normmax}[1]{\norm{#1}_\infty}
\newcommand{\restr}[1]{\underline{#1}}
\def \simplexr{\restr{\simplex}}
\def \alphar{\restr{\alpha}}
\def \betar{\restr{\beta}}
\def \gammar{\restr{\gamma}}
\def \gammaIr{\restr{\gamma}_I}
\begin{document}

\title[Portfolio Optimization with Expert Opinions]{ Portfolio
Optimization under Partial Information with Expert Opinions: a Dynamic Programming Approach}

\author{R\"udiger Frey}
\address{R\"udiger Frey, Institute for Statistics and Mathematics,
 Vienna University of Economics and Business,
 Augasse 2-6, A-1090 Vienna, Austria}
\email{ruediger.frey@wu.ac.at}

\author[Abdelali Gabih]{Abdelali Gabih}
\address{Abdelali Gabih, Universit\'{e} Cadi Ayyad, ENSA Marrakech, Laboratoire OSCARS, Boulevard
Abdelkarim Khattabi Gu\'{e}liz BP 575, 40000 Marrakech, Morocco}
\email{a.gabih@uca.ma}

\author[Ralf Wunderlich]{Ralf Wunderlich}
\address{Ralf Wunderlich, Mathematical Institute, Brandenburg
University of Technology Cottbus -- Senftenberg ,  Postfach 101344, D-03013 Cottbus,
Germany} \email{ralf.wunderlich@tu-cottbus.de }

\date{ \today \quad This version of the paper grew out of an earlier, unpublished version that was authored only by Frey and Wunderlich; see \cite{Frey-Wunderlich-2013}.}

\subjclass[2000] {Primary 49L20;  Secondary 91G10, 93E11}

\keywords{Portfolio optimization,   hidden Markov model, dynamic
programming, viscosity solution, regularization,
$\varepsilon$-optimal strategy}

\begin{abstract}
This paper investigates optimal portfolio strategies in a market
where the drift is driven by an unobserved Markov chain. Information
on the state of this chain is obtained from  stock prices and
expert opinions in the form of signals at random discrete time
points. As in Frey et al.~(2012), Int.~J.~Theor.~Appl.~Finance, 15,
No.~1, we use stochastic filtering to transform the original problem
into an optimization problem under full information where the state
variable is the filter for the Markov chain.  The dynamic programming equation for this problem is studied with viscosity-solution  techniques and with  regularization
arguments.
\end{abstract}

\maketitle

\section{Introduction}
It is well-known that optimal investment strategies in dynamic
portfolio optimization depend crucially on the drift of the
underlying asset price process. On the other hand it is notoriously
difficult to estimate drift parameters from historical asset price
data. Hence it is natural to include  expert opinions or investors'
views as additional source of information in the computation of
optimal portfolios. In the context of the classical one-period
Markowitz model this leads to the well-known Black-Littermann
approach,  where Bayesian updating is used to improve return
predictions  (see Black \& Litterman
\cite{Black_Litterman (1992)}).

Frey et al.~\cite{Frey et al. (2012)} consider expert opinions in
the context of a dynamic portfolio optimization problem in
continuous time. In their paper the asset price process is modelled
as diffusion whose drift is driven by a hidden finite-state Markov
chain $Y$. Investors observe the  stock prices and in addition a
marked point process with  jump-size distribution depending on the
current state of $Y$ that represents expert opinions. Frey et
al.~\cite{Frey et al. (2012)}  derive a finite-dimensional filter
$p_t$ with jump-diffusion dynamics for the state of $Y$ and they  reduce the portfolio
optimization problem to a problem under complete information with
state variable given by the filter $p_t$. Moreover they write down
the dynamic programming equation for the value function $V$ of that
problem and, assuming that the dynamic programming equation admits a classical solution,  they compute a candidate solution for the optimal strategy.  The precise mathematical
meaning of these preliminary results is however left open.

This issue is addressed in the present paper. A major challenge
in the analysis of the dynamic programming equation is the fact that the
equation is not strictly elliptic if
the number of states of $Y$ is larger than the number of assets. In fact, due to this non-ellipticity
it is not possible to apply any of the  known results on the existence of classical solutions to this equation.
We study two ways to address this problem. First, following the analysis
of Pham \cite{Pham (1998)} we show that  the value function is a  viscosity solution of the associated dynamic programming equation. Since the comparison principle for viscosity solutions applies  to our model, this yields an elegant characterization of the value function. However, the viscosity-solution methodology  does not provide any information on the form of (nearly) optimal strategies.

For this reason we study a second approach based on regularization arguments. Here  an additional noise term of the form $m^{- \frac{1}{2}}
d\widetilde{B}_t$, $ \widetilde{B} $ an independent  Brownian motion
of suitable dimension and $m\in\N $ large, is added to the dynamics
of the state process $p$. The dynamic programming equation
associated with the regularized optimization problem is strictly
elliptical so that  recent results of Davis \& Lleo
\cite{Davis and Lleo (2012)} imply  the existence of a classical solution $V^m$.
Moreover, the optimal strategy for the regularized problem can be characterized as solution of a quadratic optimization problem that involves $V^m$ and its first derivatives. We show that for $m \to \infty $ reward- and value function for the regularized problem and
the original problem converge uniformly for all admissible strategies.  This uniform convergence  implies that for $m$ sufficiently large the  optimal strategy for the regularized problem is a nearly-optimal strategy in the original problem, so that we have
solved the problem of finding good strategies.
In order to carry out this program we need an explicit representation of jump-diffusion processes as a solution of an SDE driven by Brownian motion and - this is the new part - some \emph{exogenous} Poisson random measure; we  refer the reader to Section \ref{spec_example} below for details.

The related literature on portfolio optimization under partial
information is discussed in detail in the companion paper \cite{Frey
et al. (2012)}. Here we just mention the papers Rieder \& B\"aeuerle
\cite{Rieder_Baeuerle2005} and Sass \& Haussmann\cite{Sass and
Haussmann (2004)} that are concerned with portfolio optimization in
models with Markov-modulated drift but without any extra
information.

The paper is organized as follows. In Section \ref{market_model} we
introduce  the model of the financial market and formulate the
portfolio optimization problem. For this problem we derive in Section
\ref{power_utility}  the dynamic programming equation  in the case
of power utility. In Section \ref{reformulation} we reformulate the state
equation in terms of an exogenous Poisson random measure. For this
reformulated state equation we provide in Section \ref{spec_example}  an
explicit construction of the jump coefficient.
The main results of this paper are presented in Sections \ref{viscosity_sol} and  \ref{regularization}. Here we  show that the value function is a viscosity solution of the dynamic programming equation. Moreover, we study
a regularized version of the dynamic programming equation and
investigate  nearly optimal strategies.

\section{Model and optimization problem}

\label{market_model} The  setting is based on
\cite{Frey et al. (2012)}. For a  fixed date $T>0$
representing the investment horizon, we work on a filtered
probability space $(\Omega,\mathcal{G},\mathbb{G},P)$, with
filtration $\mathbb{G}=(\mathcal {G}_t)_{t \in [0,T]}$ satisfying
the usual conditions. All processes are assumed to be
$\mathbb{G}$-adapted. For a generic $\mathbb{G}$-adapted process $H$
we denote by $\mathbb{G}^H$ the filtration generated by  $H$.

\paragraph{Price dynamics}
 We consider a market model  for one risk-free bond with price
$S^0_t=1$ and  $n$ risky securities with prices
$S_t=(S_t^1,\ldots,S^n_t)\tr$   given by
\begin{eqnarray}
\label{stockmodel}
dS^i_t=S^i_t\,\Big(\mu^i(Y_t)dt+\sum_{j=1}^n\sigma^{ij}dW^j_t\Big),
\quad S^i_0=s^i, \quad i=1,\cdots,n.
\end{eqnarray}
Here $\mu=\mu(Y_t) \in \R^n$ denotes the mean stock return or drift
which is driven by some factor process $Y$ described below. The
volatility $\sigma=(\sigma^{ij})_{1\leq i,j\leq n}$ is assumed to be
a constant invertible matrix and $W_t=(W_t^1,\cdots.W_t^n)$ is an
$n$-dimensional $\mathbb{G}$-adapted Brownian motion.  The
invertibility of $\sigma$ always can be ensured by a suitable
parametrization if the  covariance matrix $\sigma\sigma\tr$ is
positive definite.  The factor process $Y$  is a finite-state Markov
chain independent of the Brownian motion $W$ with state space
$\{e_1,\ldots,e_d\}$ where $e_i$ is the $i$th unit vector in $\R^d$.
The generator matrix is denoted by $Q$ and the  initial distribution
by $\pini=(\pini^1,\ldots,\pini^d)\tr $. The states of the factor
process $Y$ are mapped onto the states $\mu_1,\ldots,\mu_d$ of the
drift by the function $\mu(Y_t)=MY_t$, where $M_{lk} =
\mu_k^l=\mu^l(e_k)$, $1 \le l \le n, \, 1 \le k \le d$.

Define the return process $R$  associated with the price process $S$
by $dR_t^i=dS_t^i/S_t^i$, $i=1,\ldots,n $.  Note that $R$ satisfies
$dR_t = \mu(Y_t)dt + \sigma dW_t, $ and it is easily seen that
$\mathbb{G}^R = \mathbb{G}^{\log S} = \mathbb{G}^S\,.$ This is
useful, since it allows us to work with $R$ instead of $S$ in the
filtering part. For details we refer to \cite{Frey et al. (2012)}.

\paragraph{Investor Information}
We assume that the investor does not observe the factor process $Y$
directly; he does however know the model parameters, in particular
the initial distribution $\pini$, the generator matrix $Q$ and the
functions $\mu^i(\cdot)$. Moreover, he  has  noisy observations of
the hidden process  $Y$ at his disposal. More precisely we assume
that the investor  observes the return process $R$ and that he
receives at discrete points in time $T_n$ noisy signals about the
current state of $Y$. These signals are to be interpreted as expert
opinions; specific examples can be found in the companion paper
\cite{Frey et al. (2012)}.

We model expert opinions  by a marked point process $I=(T_n,Z_n)$,
so that at $T_n$ the investor observes the realization of a random
variables $Z_n$ whose distribution depends on  the current state
$Y_{T_n}$ of the factor process. The $T_n$ are modeled as jump times
of a standard Poisson process with intensity $\lambda$, independent
of $Y$, so that the timing of the information arrival does not carry
any useful information. The signal $Z_n$ takes values in some set
$\mathcal{Z}\subset \R^\kappa$, and we assume that given $Y_{T_n} =
e_k$, the distribution of $Z_n$  is absolutely continuous with
Lebesgue-density $f_k(z)$. We identify the marked point process
$I=(T_n,Z_n)$ with the associated counting measure denoted by $I(dt,
dz)$. Note that the $\mathbb{G}$-compensator of $I$ is $\lambda dt
\sum_{k=1}^d 1_{\{Y_t = e_k\}} f_k(z)dz$.

Summarizing, the information available to the investor is given by
the \emph{investor filtration }  $\mathbb{F}$ with
\begin{equation}
\mathcal{F}_t = \mathcal{G}_t^R \vee \mathcal{G}_t^I \,,\quad 0 \le
t \le T.
\end{equation}

\vspace*{0ex}
\paragraph{Portfolio and optimization problem}
We describe the selffinancing trading of an investor by the initial
capital $x_0>0$ and the $n$-dimensional $\mathbb{F}$-adapted trading
strategy $h $ where $h_t^i$, $i=1,\ldots,n$, represents  the
proportion of wealth  invested in stock $i$ at time $t$.  It is
well-known that in this setting the wealth process $X^{(h)}$ has the
dynamics
\begin{eqnarray} \label{wealth_phys}
\frac{dX_t^{(h)}}{X_t^{(h)}}=\sum_{i=0}^n h^i_t\frac{dS^i_t}{S^i_t}
&= &h_t^{\top}\mu(Y_t) dt+h_t^{\top}\sigma dW_t,\quad X_0^{(h)}=x_0.
\end{eqnarray}
We assume that for all $t\in[0,T]$ the strategy $h_t$ takes values
in some non-empty convex and  compact subset $\compactset$ of $\mathbb{R}^n$ that can be described in terms of $r$ linear constraints.  In mathematical terms,
\begin{equation} \label{eq:def-K}
 \compactset = \{ h \in \R^n \colon \Psi_l^\top h \le \nu_l,\,  1 \le l \le  r,\, \text{ for given } (\Psi_1, \nu_1), \dots , (\Psi_r, \nu_r) \in \R^n \times \R \}\,.
\end{equation}
We assume that there is some $h^0 \in \R^n$ such that $\Psi_l^\top h^0 < \nu_l$ for all $1 \le l \le r$ and that  $0 \in \compactset$.
The set $\compactset$  models constraints on the portfolio.  Moreover, the assumption that $h_t \in \compactset$ for
all $t$ facilitates many technical estimates in the paper. For  a specific example fix constants  $ c_1 <0$, $c_2 >1$, and let
$$
\compactset = \{ h \in \R^n\colon h_i \ge c_1 \text{ for  all } 1 \le i \le n\,
\text{ and }\sum_{i=1}^n h_i \le c_2  \}\,.
$$
This choice of $\compactset$ hat would correspond to a limit $|c_1|$ on the amount of  shortselling and a limit $c_2$ for leverage.

We denote the  class of {\em admissible trading strategies} by
\begin{equation}
\label{set_admiss} \mathcal{H}=\{h= (h_t)_{t\in[0,T]}  \colon \text{
$h$ is $\mathbb{F}$ adapted and } h_t\in \compactset \text{ for all
} t \}\,.
\end{equation}
Since $\mu(Y_t)$ is bounded and since    $\sigma$ is constant,  equation
(\ref{wealth_phys}) is well defined for all $h\in\mathcal{H}$.

We assume that the investor wants to maximize the expected utility
of terminal wealth for power utility $U(x)=\frac{x^\theta}{\theta}$,
$\theta<1,\;\theta\not=0$.\footnote{ The case $\theta =0$
corresponds to logarithmic utility $U(x) = \ln x$ which is treated
in \cite{Frey et al. (2012)}.} The optimization problem thus reads
as
\begin{eqnarray}
\label{opti_org}
\max \{ E(U(X_T^{(h)})) \colon {h\in\mathcal{H}}\}.
\end{eqnarray}
This is a maximization problem under partial information since  we
have required that the strategy $h$ is adapted to the investor
filtration  $\mathbb{F}$.

\paragraph{Partial information and filtering}
Next we explain how the control problem (\ref{opti_org}) can be
reduced to a control problem with complete information via filtering
arguments. We  use the following notation: for a generic process $H$
we denote by $\widehat{H}_t=E(H|\mathcal{F}_t)$ its optional
projection on  the filtration $\mathbb{F}$, and   the filter for the
Markov chain $Y_t$ is denoted by $p_t=(p_t^1,\cdots,p_t^d)$ with
$p^k_t=P(Y_t=e_k|\mathcal{F}_t),\; k=1\ldots,d$.  Note that for a
process of the form $H_t=h(Y_t)$ the optional projection is given by
$\widehat{h(Y_t)}=\sum_{k=1}^d h(e_k)p_t^k$. In particular, the
projection of of the drift equals
$$ \widehat{\mu(Y_t)} = \sum_{k=1}^d \mu(e_k) p_t^k = M p_t\,.$$
The following two processes will drive the dynamics of $p_t$. First,
let
\begin{eqnarray*}
{\widetilde W}_t&:=&\sigma^{-1}(R_t - \int_0^t M p_s ds).
\end{eqnarray*}
By standard results from filtering theory $\widetilde W$ is an
$\mathbb{F}$-Brownian motion (the so-called innovations process).
Second, define the predictable random measure
 \[\nu_I(dt, dz)=\lambda  dt\sum_{k=1}^{d}p^k_{t-}f_k(z)dz.\]
By standard results on point processes $\nu_I$ is  the
$\mathbb{F}$-compensator of $I$, see for instance Bremaud
\cite{Bremaud}. The compensated random measure will be denoted by
${\widetilde{I}}(dt, dz) := I(dt, dz) -\nu_I(dt, dz)$.

Using a combination of the HMM filter (see e.g.~Wonham \cite{Wonham
(1965)}, Elliott et al.~\cite{Elliott et al. (1994)}, Liptser \&
Shiryaev \cite{Liptser-Shiryaev}) and Bayesian updating, in
 \cite{Frey et al. (2012)}   the following
$d$-dimensional SDE system for the dynamics of the filter $p$ is
derived
\begin{equation}
\begin{split}
 dp_t&=Q\tr p_t dt+ \beta\tr(p_t) d{\widetilde W}_t +
 \int_{\mathcal{Z}} \gammaI(p_{t-},z) \widetilde{I}(dt, dz)
\label{filter_update}
\end{split}
\end{equation}
with initial condition  $p_0^k = \pini^k.$ Here, the matrix  $\beta
=\beta(p) = (\beta_1,\ldots,\beta_d) \in \R^{n\times d}$ and the
vector $\gammaI  =\gammaI(p,z) = (\gammaI^1,\ldots,\gammaI^d)\tr \in
\R^{ d}$  are defined by
\begin{equation}
\label{beta_gamma_def}
\begin{split}
\beta_k(p)&= p^k\Big(\sigma^{-1}\big(\mu_k-\sum_{j=1}^{d}p^j\mu_j\big)  = p^k\sigma^{-1}M(e_k-p) \in \R^n\\
\quad\text{and}\quad \gammaI^k(p,z)&=
p^k\bigg(\frac{f_k(z)}{\overline f(z,p)}-1\bigg),\, 1 \le k \le d,
\quad\text{with}\quad \overline{f}(z,p) =  \sum_{k=1}^d p^k f_k(z).
\end{split}
\end{equation}

It is well-known (see e.g. Lakner \cite{Lakner (1998)}, Sass \&
Haussmann \cite{Sass and Haussmann (2004)}) that the
$\mathbb{F}$-semimartingale decomposition of  $X$ is given by
\begin{eqnarray}
\label{V_semimart} \frac{dX_t^{(h)}}{X_t^{(h)}} &=&h_t^{\top}\, M
p_t \,dt+h_t^{\top}\sigma d{\widetilde W}_t.
\end{eqnarray}
Now note that for  a constant strategy  $h_t\equiv h\in\compactset$
the $(d+1)$-dimensional process $(X^{(h)},p)$ is an
$\mathbb{F}$-Markov process as is immediate from the dynamics in
\eqref{filter_update} and \eqref{V_semimart}. Hence  the
optimization problem (\ref{opti_org}) can be considered as a control
problem under complete information with the $(d+1)$-dimensional
state variable process $(X^{(h)},p)$. This control problem is
studied in the remainder of the paper.

\section{Dynamic programming  equation for the case of power utility}
\label{power_utility}
\vspace{0.1cm}
\paragraph{A simplified optimization problem}
As a first step,  we simplify the control problem by a change of
measure. As shown in Nagai \& Runggaldier \cite{Nagai and
Runggaldier (2008)} this measure change leads to a new problem where
the set of state variables is reduced to $p$ and where the dynamic
programming equation takes on a simpler form. First we compute for
an admissible strategy $h\in \mathcal{H}$ the utility of terminal
wealth $U(X_T^{(h)}) = \frac{1}{\theta}(X_T^{(h)})^\theta$. From
(\ref{V_semimart}) it follows that
\begin{equation}\label{eq:power-wealth}
\frac{1}{\theta}(X_T^{(h)})^\theta = \frac{x_0^{\theta}}{\theta}
\exp\Big\{\theta\int_0^T\Big(h_s^{\top}\,Mp_s -
\frac{1}{2} \norm{\sigma^\top h_s}^2\Big)ds + \theta\int_0^T
h_s\tr\sigma d{\widetilde W}_s\Big\},
\end{equation}
where $\norm{.}$ denotes the Euclidean norm.
Define now the random variable
 $L_T^{(h)} = \exp\big\{\int_0^T \theta
h_s^{\top}\sigma d{\widetilde W}_s -
            \frac{1}{2}\int_0^T \norm{\theta \sigma^{\top}h_s}^2  ds\big\}$ and the function
\begin{equation}
 \label{eq:def-b}
  b(p,h;\theta) = -\theta\Big(h^{\top} Mp - \frac{1-\theta}{2} \norm{\sigma^{\top} h }^2\Big).\\
\end{equation}
With this notation  \eqref{eq:power-wealth} can be written in the form
\begin{equation}
\frac{1}{\theta}(X_T^{(h)})^\theta = \frac{x_0^{\theta}}{\theta}
\,L_T^{(h)}\, \exp\Big\{\int_0^T - b(p_s,h_s; \theta) ds\Big\}\,.
\end{equation}

Since $\sigma $ is deterministic  and since $h$ is bounded,  the Novikov
condition  implies that  $E(L_T^{(h)})=1$. Hence we can define an
equivalent measure $P^h$ on $\mathcal{F}_T$  by $dP^h /dP =
L_T^{(h)}$, and Girsanov's theorem guarantees that $B_t:={\widetilde
W}_t-\theta \int_0^t\sigma^{\top} h_s ds$  is a standard
$\mathbb{F}$-Brownian motion.  Substituting into
(\ref{filter_update}) we find the following dynamics for the filter
under $P^h $
\begin{eqnarray}
\label{filter_h}
 dp_t&=&\alpha(p_t,h_t) dt+ \beta\tr(p_t) dB_t +
 \int_{\mathcal{Z}} \gammaI(p_{t-},z) \widetilde{I}(dt, dz)
\\
\label{alpha_def} \text{where}\quad \alpha&=&\alpha(p,h)= Q^{\top} p
+ \theta \beta^{\top}(p)\sigma^{\top} h .
\end{eqnarray}
In view of these transformations, for $0< \theta <1$ the
optimization problem (\ref{opti_org}) is equivalent to
\begin{equation} \label{eq:objective-1}
\max \Big \{  E \Big(\exp\Big\{\int_0^T
-b(p_s^{(0,\pini,h)},h_s;\theta) ds \Big\}\Big) \colon h \in
\mathcal{H} \Big \}
\end{equation}
 where  we denote by $p_s^{(t,p,h)}$  the solution of (\ref{filter_h})
 for $s\in [t,T]$ starting at time $t\in[0,T]$  with initial value $p\in \simplex$
 for strategy $h\in \mathcal{H}$.
For $\theta < 0$ on the other hand (\ref{opti_org}) is equivalent to minimizing
the expectation in \eqref{eq:objective-1}. In the sequel we will
concentrate on the case  $0< \theta <1$; the necessary changes for
$\theta <0$ will be indicated where appropriate. Moreover, $\theta$
will be largely removed from the notation.  The reward and value
function  for this control problem are given by
\begin{eqnarray}
\nonumber v(t,p,h) &=& E \Big( \exp\Big\{\int_t^T -b
(p_s^{(t,p,h)},h_s) ds\Big\}\Big)
\quad \text{for } h\in \mathcal{H},\\
\label{reward_value_function}\\[-2ex]
\nonumber V(t,p) & = &\sup \{ v(t,p,h)\colon h \in \mathcal{H}\}.
\end{eqnarray}
Note that $v(T,p,h) = V(T,p)=1$.

\paragraph{The dynamic programming equation} Next, we derive  the form of the dynamic programming equation for
$V(t,p)$. We begin with the generator of the \textit{state process}
$p_t$  (the solution of the SDE (\ref{filter_h})) for a constant strategy $h_t \equiv h$.  Denote by   $\mathcal{S}=\{p\in \R^d:
\sum_{i=1}^d p^i =1, p^i\ge 0, i=1,\ldots,d\}$  the unit simplex in
$\R^d$. Standard arguments show that the solution of this SDE is a Markov process
whose generator $\mathcal{L}^{h}$ operates on $g\in \mathcal{C}^2
(\mathcal{S})$ as follows
\begin{eqnarray}\label{new-generator}
\mathcal{L}^{h}g(p)=
\frac{1}{2}\sum_{i,j=1}^{d}\beta_i^{\top}(p)\beta_j(p)g_{p^ip^j}
&+& \sum_{i=1}^{d} \alpha^i(p,h)g_{p^i}
\\ \nonumber
&+&
\lambda\int_{\mathcal{Z}}\{g(p+\gammaI(p,z))-g(p)\}\overline{f}(z,p)dz.
\end{eqnarray}

By standard arguments the dynamic programming equation associated to
this optimization problem is
\begin{eqnarray}\label{bell-equa0}
V_t(t,p)+\sup_{h \in \compactset}
\Big\{\mathcal{L}^{h}V(t,p)-b(p,h;\theta)V(t,p)\Big\}=0, \quad (t,p)
\in [0,T) \times \mathcal{S},
\end{eqnarray}
with terminal condition  $V(T,p)=1.$ In case that $\theta <0$ the
equation is similar, but the $\sup$ is replaced by an $\inf$.
Plugging in $\mathcal{L}^{h}$ as given in (\ref{new-generator}) and
$b(p,h)$ as given in \eqref{eq:def-b} into ($\ref{bell-equa0}$) the
dynamic programming  equation can be written more explicitly as
\begin{align}
\nonumber
0&=V_t(t,p)+
\frac{1}{2}\sum_{k,l=1}^{d}\beta_k^{\top}(p_t)\beta_l(p_t)V_{p^kp^l}(t,p) +
\sum_{k=1}^{d}\Big\{\sum_{l=1}^{d}Q^{lk}p^l\Big\}V_{p^k}(t,p) \\
\label{Bell-equa1}
&+ \lambda\int_{\mathcal{Z}}\{V(t,p+\gammaI(p,z))-V(t,p)\}\overline{f}(z,p)dz\\
&+
\sup_{h\in\compactset}\Big\{\sum_{k=1}^{d}\beta_k^{\top}(p_t)\sigma^{\top}\theta
hV_{p^k}(t,p)+\theta V(t,p)\Big(h^{\top} M p-\frac{1}{2}
\norm{\sigma^{\top}h}^2 (1-\theta)\Big)\Big\}.
\nonumber
\end{align}
Suppose for the moment that a classical solution to
(\ref{Bell-equa1}) exists.  The argument of the supremum
in the last line of \eqref{Bell-equa1} is quadratic in $h$ and strictly concave (as $\sigma \sigma^\top$ is positive definite). Hence this function attains a unique maximum $h^*$ on the convex set $\compactset$. Moreover, as shown in Davis and Lleo \cite{Davis and Lleo (2012)}, Proposition~3.6,  $h^*$ can be chosen as a measurable function of $t$  and $p$.   Hence there exists a solution $p^*$ of the SDE
\eqref{filter_h} with $h_t = h^*(t,p_t^*)$; this can be verified by a similar application of the Girsanov theorem
as in the derivation of the equation \eqref{filter_h}.
Then standard verification arguments along the lines of  Theorem~3.1 of Fleming \& Soner \cite{bib:fleming-soner-06} or Theorem~5.5 of Davis and Lleo \cite{Davis and Lleo (2012)}
immediately give that $V$ is the value function of the control problem
\eqref{eq:objective-1} and that $h_t^* := h^*(t,p_t^*)$ is the optimal strategy.

\begin{remark}\label{rem:strategy}
If  for some $(t,p)$ $h^*(t,p)$ is inner point of $\compactset$, an explicit formula for $h^*(t,p)$ can be given. In that case $h^*(t,p)$ is  given by the solution $h^*$ of the following linear equation
(the first-order condition for the unconstrained problem)
\begin{eqnarray*}
\sigma\sum_{k=1}^{d}\beta_k(p) V_{p^k}(t,p)+ V(t,p)\Big( M p
-\sigma\sigma^{\top}h(1-\theta)\Big)=0.
\end{eqnarray*}
Since $\sigma$ is an invertible matrix $h^*$ equals
\begin{equation*}
h^*=h^*(t,p) = \frac{1}{(1-\theta)}(\sigma\sigma^{\top})^{-1}\Big\{M
p+\frac{1}{V(t,p)}\sigma \sum_{k=1}^d \beta_k(p)V_{p^k}(t,p) \Big\}.
\end{equation*}
\end{remark}

However, the existence of a classical solution of equation~\eqref{Bell-equa1} is an open
issue. The main problem is the fact that one cannot guarantee that  the  equation is
\emph{uniformly elliptic}. To see this note that the coefficient matrix  of the second
derivatives in~\eqref{Bell-equa1} is  given by  $C(p) =  \beta^{\top}(p) \beta(p) $.
By definition equation~\eqref{Bell-equa1} is uniformly elliptic if
the matrix $C(p)$ is strictly positive
definite uniformly in $p$. A necessary condition for this is that  there are no non-trivial solutions of the linear equation $\beta x =0$ so that we need to have the inequality $n\ge d$ (at least as many assets as states of the Markov chain $Y$). Such an assumption is hard to
justify economically; imposing it nonetheless out  of mathematical necessity
would severely limit the applicability of our approach.

In the  present paper we therefore  study two alternative routes to giving a precise mathematical meaning to the dynamic programming equation \eqref{Bell-equa1}. First, following the analysis
of Pham \cite{Pham (1998)}, in Section~\ref{viscosity_sol} we show that  the value function is a viscosity solution of the associated dynamic programming equation. Since the comparison principle for viscosity solutions applies in our case, this provides an elegant characterization of the value function. However, the viscosity-solution methodology  does not provide any information on the form of the optimal strategies. For this reason,
in Section~\ref{regularization}
we use regularization arguments to find  approximately optimal
strategies. More precisely, we add  a term $\frac{1}{\sqrt{\np}}  d \widetilde{B}_t$,
with $\np\in \N$ and $\widetilde B$ a  Brownian
motion of suitable dimension and independent of $B$, to the dynamics of the state equation
(\ref{filter_h}). The HJB equation associated with these regularized
dynamics has an additional term $\frac{1}{2\np} \Delta V$, $\Delta$ the Laplace operator,  and is therefore uniformly elliptic. Hence the results of
Davis \& Lleo \cite{Davis and Lleo (2012)} apply directly to the modified
equation, yielding the existence of a classical solution
$V^\np $. Moreover, the optimal strategy ${}^\np h^*$
of the  regularized problem is  given by the argument of the supremum
in the last line of \eqref{Bell-equa1}   with
$V^\np $ instead of $V$. We then derive convergence results for the reward- and the value function of the regularized problem as $m \to \infty$. In particular, we show in Theorem~\ref{eps_opt} that for $\np$ sufficiently large ${}^\np h^*$ is
approximately optimal in the original problem.

\section{Reformulation of the State Equation}
\label{reformulation} To carry out the program described above we
have to reformulate the state equation for a number of reasons.
First, in our model the state variable process $p$  (the solution of (\ref{filter_h}) takes values in the
simplex $\simplex$ which is a subset of a $d-1$-dimensional
hyperplane  of $\R^d$. If we introduce the announced regularization
to the diffusion part of the state equation then the state variable
will leave this hyperplane  and takes values in the whole $\R^d$ so
that the normalization property of $p$ is violated, which creates
technical difficulties. Second,  in our analysis  we need to apply
results from the literature on the theory of dynamic programming of
controlled jump diffusions, such as Pham \cite{Pham (1998)} and
Davis \& Lleo \cite{Davis and Lleo (2012)}. These papers consider
models where the jump part of the state variable is driven by an
\emph{exogenous} Poisson random measure, and this structure is in fact essential for many arguments in these papers.  In our model, on the other hand,
the measure $\widetilde{I}$ is not an exogenous Poisson random
measure since the law of the compensator $\nu_I$ depends on the
solution $\pi_t$. Hence we need to reformulate the dynamics of the state variable process in terms of an exogenous Poisson random measure.

\paragraph{Restriction to a $d-1$-dimensional state}
We rewrite the state equation in terms of the `restricted'
($d-1$)-dimensional process
$\pr = (\pr^1,\ldots,\pr^{d-1})\tr = (p^1,\ldots,p^{d-1})\tr.$
Then the original state $p$ can be recovered from $\pr$ by using the
normalization property for the last component $p^d$ and we define
$p=R \pr := \big(\pr_1,\dots,\pr_{d-1}, 1 -\sum_{i=1}^{d-1} \pr^i
\big)^\top$. Assuming  $p\in \simplex $ implies that the restricted
state process takes values in
$$
\simplexr =\Big\{\pr\in \R^{d-1}:  \sum_{i=1}^{d-1} \pr^i \le 1,
\pr^i\ge 0, ~i=1,\ldots,d-1 \Big\}.
$$
Now the state equation for $\pr\in \simplexr$ associated to
\eqref{filter_h} reads as
\begin{eqnarray}
\label{filter_restr_I}
 d\pr_t&=&\alphar(\pr_t,h_t) dt+ \betar\tr(\pr_t) dB_t +
 \int_{\mathcal{Z}} \gammaIr(\pr_{t-},z) \widetilde{I}(dt, dz)
\end{eqnarray}
where the coefficients are given by
\begin{align}
\restr{\alpha}(\pr,h) &=({\alpha}^1( R \pr, h),\ldots, {\alpha}^{d-1}( R \pr, h))\tr\in \R^{d-1}\\
\restr{\beta}(\pr) &= ({\beta}_1(R \pr),\ldots, {\beta}_{d-1}(
R\pr))\in
\R^{n\times d-1} \\
\restr{\gamma}_I(\pr,z )& =({\gamma}^1_I( R\pr,z),\ldots,
{\gamma}^{d-1}_I( R\pr,z) )\tr\in \R^{d-1}.
\end{align}
It is straightforward  to give an explicit expression for
$\restr{\alpha}$, $\restr{\beta}$ and $\restr{\gamma}_I$, but such
an expression  is not needed in the sequel. The original state  can
be recovered from $\pr$ by setting $p=R\pr$.

\paragraph{Exogenous Poisson random measure}
In the remainder of the paper we assume that the  state process
solves the following SDE
\begin{eqnarray}
\label{filter_N}
 d\pr_t&=&\alphar(\pr_t,h_t) dt+ \betar\tr(\pr_t) dB_t +
 \int_{\mathcal{U}} \gammar(\pr_{t-},u) \widetilde{N}(dt, du),
\end{eqnarray}
where $\alphar$ and $\betar$ are defined above, $\gammar:
\simplexr\times \mathcal{U}\to \R^{d-1}$, and $\widetilde{N}$ is the
compensated measure to  some finite activity Poisson random measure
$N$ with jumps in a set $\mathcal{U} \subset \R^\kappa$. The
compensator of $N$ is denoted by  $\nu(du) \lambda dt$, i.e. we have
$\widetilde N(dt , du) = N(dt, du)-\nu(dz)\lambda dt$. In the next
section we show that for a proper choice of $ \gammar (\cdot)$ and
$\widetilde{N}(dt, du)$  the solution of \eqref{filter_N} has the  same
law as the original state process from \eqref{filter_restr_I}.

In order to ensure that  SDE \eqref{filter_N} has for each control
$h\in\mathcal{H}$ a unique strong solution and  for the proof of
some of the  estimates in Section \ref{regularization} the
coefficients $\alphar, \betar$ and $\gammar$  have to satisfy
certain Lipschitz and growth conditions (see
\cite{bib:Jacod-Shiriaev-03} and \cite{Pham (1998)}). These
conditions are given below. For technical reasons we require that
the conditions hold not only for~$\pr\in\simplexr$ but also for a
slightly larger set $\simplexr_\eps \supset \simplexr$ defined for
sufficiently small $\eps\ge 0$ by
\[\simplexr_\eps := \{\pr \in \R^{d-1}: \dist(\pi, \simplexr)\le \eps\},\]
where we denoted the distance of $\pi \in \R^{d-1}$ to $\simplexr$
by $ \dist(\pi, \simplexr) := \inf\{|\pr-\pr_0|_\infty \colon
\pr_0\in \simplexr\} $, for $|\pr|_\infty$ the maximum norm on
$\R^{d-1}.$

\begin{assumption}[Lipschitz and growth conditions]
\label{coef-model}
There exist  constants \linebreak[4] $C_L,\epsmax>0$  and a function $\rho:
\mathcal{U}\to\R_+$ with
$\int_{\mathcal{U}}\rho^2(u)\nu(du)< \infty$  such that for all
$\pr_1,\pr_2\in\simplex_\eps$, $\eps< \epsmax$ and $k=1,\ldots,d$
\begin{eqnarray}
\label{lipsch:b,sigma}
\sup_{h\in\compactset}
\norm{\alphar(\pr_1,h)-\alphar(\pr_2,h)}+\norm{\betar_k(\pr_1)-\betar_k(\pr_2)}&\leq&
C_L\norm{\pr_1-\pr_2} ,
\\
\label{growth_cond} \norm{\alphar(\pr,h)}+\norm{\betar_k(\pr)} &\le&
C_L(1+\norm{\pr}),\\
\norm{\gammar(\pr_1,u)-\gammar(\pr_2,u)}&\leq& \rho(u)\norm{\pr_1-\pr_2}, \label{lipsch:Delta}\\
\norm{\gammar(\pr,u)}&\leq&\rho(u)(1+\norm{\pr}).\label{growth-delta}
\end{eqnarray}
\end{assumption}

In our case the coefficients $\alphar$ and $\betar$ are continuously
differentiable  functions of $\pr$  on the compact set
$\simplexr_\eps$ and $h\in \compactset$ is bounded. Hence, the
Lipschitz and growth condition \eqref{lipsch:b,sigma} and
\eqref{growth_cond} are fulfilled. Specific conditions on the
densities $f_k(\cdot)$ that guarantee \eqref{lipsch:Delta} and
\eqref{growth-delta} are given in the next section.

For  the optimization problem \eqref{eq:objective-1} we can give an equivalent formulation
in terms of the restricted state variable $\pi$
with dynamics given in \eqref{filter_N}, that is the equation  driven by an exogenous Poisson random measure. For this it is convenient to denote for a given strategy $h\in \mathcal{H}$ the solution of the SDE \eqref{filter_N} starting at time $t \le T$ in the state $\pi \in \simplexr$ by $\pr^{(t,\pi,h)}$.
 This control problem reads as
\begin{equation} \label{control_problem_restr}
\max \Big \{  E \Big(\exp\Big\{\int_0^T
-b(R\pr_s^{(0,\prini,h)},h_s;\theta) ds \Big\}\Big) \colon h \in
\mathcal{H} \Big \}.
\end{equation}
The associated reward and value function for $(t,\pr)\in [0,T]\times \simplexr$ are
\begin{eqnarray}
 \nonumber v(t,\pr,h) &=& E \Big(
\exp\Big\{\int_t^T -b ( R \pr_s^{(t,\pr,h)},h_s)\,
ds\Big\}\Big)
\quad \text{for } \,h\in \mathcal{H},\\[-1ex]
\label{value_restr}\\[-1ex]
\nonumber V(t,\pr) & = &\sup \{ v(t,\pr,h)\colon h \in
\mathcal{H}\}.
\end{eqnarray}
 The generator associated to the solution of the
 state equation \eqref{filter_N} reads as
\begin{eqnarray}\label{generator_restr}
\mathcal{L}^{h} g(\pr)&=&
\frac{1}{2}\sum_{i,j=1}^{d-1}\restr{\beta}_i^{\top}(\pr)\restr{\beta}_j(\pr)g_{\pr^i\pr^j}
+ \sum_{i=1}^{d-1} \restr{\alpha}^i(\pr,h) g_{\pr^i}
\\\nonumber && \hspace*{15mm}
+\int_{\mathcal{U}}\{g(\pr+\restr{\gamma}(\pr,u))-g(\pr)\}\nu(du)
\end{eqnarray}
and the associated dynamic programming equation is
\begin{eqnarray}
\label{PDE_restr}
V_t(t,\pr)+\sup_{h \in \compactset} \Big\{\mathcal{L}^{h}
V(t,\pr)-b(R\pr,h;\theta)V(t,\pr)\Big\}=0, ~(t,\pr) \in [0,T) \times
\simplexr.
\end{eqnarray}

\section{State Equation with Exogenous Poisson Random Measure}
\label{spec_example}
In this section we show how a solution of the state equation
\eqref{filter_restr_I} can be constructed by means of an SDE of the
form \eqref{filter_N} that is driven by an exogenous Poisson random
measure. The main tool for constructing $\gammar$ will be the
so-called inverse Rosenblatt or distributional transform, see
R\"uschendorf \cite{Rueschendorf (2009)}, which is an extension of
the quantile transformation to the multivariate case.

We impose the following regularity conditions on the functions
$f_k(\cdot)$ that represent the conditional densities of $Z_n$ given
$Y_{T_n} = e_k$.

\begin{assumption}\label{ass_density}
All densities $f_k(z)$, $1 \le k \le d$,  are continuously
differentiable and have the common support $\mathcal{Z}$. We assume
that  $\mathcal{Z}$ is a  $\kappa$-dimensional rectangle
$[a,b]\subset \R^\kappa$, i.e.
 $$ \mathcal{Z}=\{z\in\R^\kappa: -\infty < a_k\le z_k\le b_k <\infty,\; k=1,\ldots,\kappa\}.$$
Moreover, there is some  $0<C_1$ such that $f_k(z) > C_1$  for all
$z\in \mathcal{Z}, k=1,\ldots, d$.
 \end{assumption}

\begin{remark}
Examples for densities that satisfy the above assumption are easily
constructed. Start with $C^1$-densities  $\widetilde{f}_k$  and
choose some (large) rectangle $\mathcal{Z}$ then
\begin{equation}
\label{density1}f_k(z)=
(1-\eps)\frac{\widetilde{f}_k(z)}{\int_\mathcal{Z}
\widetilde{f}_k(u)du} 1_\mathcal{Z}(z) +\eps
\frac{1}{|\mathcal{Z}|}1_\mathcal{Z}(z)  \quad
\text{where}\quad|\mathcal{Z}| = \prod\limits_{i=1}^\kappa (b_i-a_i)
\end{equation}
$k=1,\ldots,\kappa, ~\eps\in (0,1]$, satisfy the requirements of
Assumption \ref{ass_density}. Intuitively, \eqref{density1}
corresponds to a mixture of the original density $\widetilde{f}_k$
and the uniform distribution. The latter distribution carries no
information so there is uniform  a lower bound on the information
carried by a single expert opinion.
\end{remark}

\paragraph{Inverse Rosenblatt Transform}
In order to write our state equation \eqref{filter_restr_I} in the
form \eqref{filter_N}  with exogenous Poisson random measure we
apply  the inverse Rosenblatt transform, see for instance see
\cite{Rueschendorf (2009)}. Denote by $\mathcal{U} = [0,1]^\kappa$
the unit cube in $\R^\kappa$. In our context the inverse Rosenblatt
transform is a mapping $G:\mathcal{U}\to\mathcal{Z} $ such that for
a uniform random variable $U $ on $[0,1]^\kappa$ the random variable
$Z= G(U)$ has density $\overline{f}(z,p) =  \sum_{j=1}^d p^j
f_j(z),~p=R\pr$; the mapping $G$   can thus be viewed as a
generalization of the well-known quantile transform.

Now we explain the construction of the transformation $G$ in detail.
First, we define  for
$k=1,\ldots,\kappa-1$, $p=R\pr$, the marginal densities
\begin{equation}
\label{f_margin}
 f_{Z_1\ldots Z_{k}}(z_1,\ldots,z_{k},p) =
\int_{a_{k+1}}^{b_{k+1}}\hspace*{-1.0em}\ldots
\!\int_{a_\kappa}^{b_\kappa}
\overline{f}(z_1,\ldots,z_{k},s_{k+1},\ldots, s_\kappa,p) ds_\kappa
\ldots ds_{k+1}.
\end{equation}
For $k=\kappa$ we set $f_{Z_1\ldots Z_\kappa} := \overline{f}$. Next
we define for $k=2,\ldots,\kappa$ the conditional densities
\begin{eqnarray*}
f_{Z_k|Z_1\ldots Z_{k-1}}(z_k|z_1,\ldots z_{k-1},p) &:=&
\frac{f_{Z_1\ldots Z_k}(z_1,\ldots,z_k,p) }{f_{Z_1\ldots
Z_{k-1}}(z_1,\ldots,z_{k-1},p) }
\end{eqnarray*}
and the associated  distribution functions
\begin{eqnarray*}
F_{Z_k|Z_1\ldots Z_{k-1}}(z_k|z_1,\ldots z_{k-1},p) &=&
\int_{a_k}^{z_k} f_{Z_k|Z_1\ldots Z_{k-1}}(s_k|z_1,\ldots z_{k-1},p)
ds_k;
\end{eqnarray*}
for $k=1$ we denote by $F_{Z_1}$ the distribution function of $Z_1$.
Now we introduce the Rosenblatt transform
$\widetilde{F}:\mathcal{Z}\to [0,1]^\kappa=\mathcal{U}$, $z \mapsto (\widetilde{F}_1(z,p),\ldots,\widetilde{F}_\kappa(z,p) )^\top$ by
\begin{equation}
\label{Fw_def}  \widetilde{F}_1(z,p) = F_{Z_1}(z_1,p)
 \text{ and }   \widetilde{F}_k(z,p) =
F_{Z_k|Z_1\ldots Z_{k-1}}(z_k|z_1,\ldots z_{k-1},p),\; k=2,\ldots,
\kappa.
\end{equation}
Clearly,  $\widetilde{F}_k(z,p)$ depends on the first $k$ variables
$z_1,\ldots, z_k$, only. The desired transformation $G$ will be the inverse of $\widetilde F$, and the explicit form  of $\widetilde{F}$ is needed when we estimate the derivatives of $G$ in the proof of Lemma~\ref{gamma_lipschitz} below.

Assumption
\ref{ass_density} ensures that the joint density $\overline{f}(z,p)$
is finite and bounded away from zero. Hence, the conditional
densities $f_{Z_k|Z_1\ldots Z_{k-1}}(z|z_1,\ldots z_{k-1},p)$  are
strictly positive,  and the mapping $z \mapsto F_{Z_k|Z_1\ldots Z_{k-1}}(z|z_1,\ldots z_{k-1},p)$ is  strictly increasing and hence invertible. In the sequel we denote the  corresponding inverse function  by
$F^{-1}_{Z_k|Z_1\ldots Z_{k-1}}(\cdot|z_1,\ldots z_{k-1},p)$.

Now the desired transformation $Z= G(U)=G(U,p)$ with
transformation function $G:\mathcal{U} \to \mathcal{Z}$, $u \mapsto (G_1(u,p),\ldots,
G_\kappa(u,p))^\top$   can be
defined recursively by
\begin{equation} \label{G_def}
\begin{split}
G_1(u,p) &=F_{Z_1}^{-1}(u_1,p)\,,  \text{ and for }k=2,\ldots,\kappa,
\\
 G_k(u,p) &= F_{Z_k|Z_1\ldots Z_{k-1}}^{-1}\big(u_k\mid G_1(u,p),\ldots, G_{k-1}(u,p)\big)
\end{split}
\end{equation}
Note, that by construction it holds  $G(\widetilde{F}(z,p),p) = z$.
From \cite{Rueschendorf (2009)} it is known that for $U$ is
uniformly distributed in $[0,1]^\kappa$,  the random vector
$Z=(Z_1,\ldots,Z_\kappa)^\top=G(U,p)$ has the joint distribution
density $\overline f(z,p)$.

With the transformation $G$ at hand we  define the jump coefficient
$\gammar(\pr,u)$ by
\begin{equation}
\label{gammaN_def}
 \gammar^k(\pi,u) =
\pi^k\Big(\frac{f_k(G(u,R\pr))}{\overline f(G(u,R\pr),R\pr)} -1\Big)
~~\text{for } u\in\mathcal{U},~~k=1,\ldots,d-1.
\end{equation}
Moreover, we choose the  Poisson random measure $ N(dt, du) $ in
\eqref{filter_N} such that the associated  compound Poisson process
has constant intensity $\lambda$ and jump heights which are
uniformly distributed on $\mathcal{U}=[0,1]^\kappa$. Then, the
compensator of $N$ is $\nu(du)dt=\lambda du\, dt$ and the
compensated measure reads as $\widetilde{N}(dt, du) = N(dt, du) -
\lambda du\, dt$. Note that with this definition  the solution
$\pr_t$ of (\ref{filter_N}) satisfies for some Borel set $ A \subset
\R^{d-1}$
\begin{eqnarray*}
P(\Delta \pr_{T_n} \in A\mid \mathcal{F}_{T_n -}) &=&  \int_{\mathcal{U}} 1_A \big(\gammar(\pr_{T_n -},u)\big) du \\
&=& \int_{\mathcal{U}} 1_A \Big(\gammaIr\big(\pr_{T_n -}, G(u,R\pr_{T_n -})\big)\Big) du\\
&=& \int_{\mathcal{Z}} 1_A \big(\gammaIr (\pr_{T_n -},z)\big)\,\overline{f}(z,R\pr_{T_n -}) \, dz\,.
\end{eqnarray*}
 Hence  with the above choice of $\gammaN$ and $N(dt, du)$,  for constant $h$ the process $R\pr$, $\pr$ the solution of the SDE~\eqref{filter_N}, solves the martingale problem associated to the generator $\mathcal{L}^h$ from \eqref{new-generator}. Below we show that under Assumption~\ref{ass_density} the Lipschitz and growth conditions from~Assumption~\ref{coef-model} hold, so that the SDE~\eqref{filter_N} has a unique solution. It is well-known that this implies that the martingale problem associated with  $\mathcal{L}^h$ has a unique solution, see for instance Jacod \& Shiriaev \cite{bib:Jacod-Shiriaev-03}, Theorem~III.2.26. Hence $R\pr$  has the same law as the state variable process $p$ in~\eqref{filter_h}, which shows that we have achieved the desired  reformulation of the dynamics of the problem in terms of an exogenous Poisson random measure.

\begin{remark} Admittedly, the construction of $G$ and $\gammar$ is quite involved. The main reason for this  is the fact that we consider the case of multidimensional expert opinions with values in $\R^\kappa$ for some $\kappa >1$. Note however, that such a multivariate situation arises naturally in a model with more than one risky asset.
\end{remark}

\paragraph{Lipschitz and growth conditions} The next Lemma  states that under Assumption~\ref{ass_density} the functions $\gammar^k(\pi,u)$ satisfy the Lipschitz and growth conditions \eqref{lipsch:Delta} and \eqref{growth-delta}. The proof is given in Appendix \ref{proof_gamma_lipschitz}.
\begin{lemma}
\label{gamma_lipschitz} Under Assumption \ref{ass_density} and for
$\eps \le \epsmax:=\frac{C_1}{(d-1)C_2}$ the coefficient
$\gammar(\pr,u)$ defined in \eqref{gammaN_def} satisfies for $\pr\in
\simplexr_\eps$ the Lipschitz and growth condition
\eqref{lipsch:Delta} and \eqref{growth-delta}.
\end{lemma}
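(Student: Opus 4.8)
The plan is to verify the two estimates \eqref{lipsch:Delta} and \eqref{growth-delta} directly from the explicit formula \eqref{gammaN_def}, in both cases with a \emph{constant} weight $\rho(u)\equiv\rho$. Since $\mathcal{U}=[0,1]^\kappa$ carries finite Lebesgue measure and $\nu(du)=du$, any such constant trivially satisfies the integrability requirement $\int_\mathcal{U}\rho^2(u)\,\nu(du)<\infty$, so that only the two pointwise bounds remain.

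The decisive preliminary step is to bound the denominator in \eqref{gammaN_def} away from zero, uniformly in $u$ and $\pr$. On the simplex this is immediate, since $\overline f(z,p)=\sum_k p^k f_k(z)\ge C_1\sum_k p^k=C_1$ whenever $p\in\simplex$. For $\pr\in\simplexr_\eps$ the point $p=R\pr$ may have small negative components; writing $S^-:=\sum_{k\colon p^k<0}\norm{p^k}$ for the total negative mass, one checks from $\sum_k p^k=1$ and the definition of $\simplexr_\eps$ that $S^-\le (d-1)\eps$. Since the $f_k$ are continuous on the compact set $\mathcal{Z}$, they admit a common upper bound $C_2$, whence
\[
\overline f(z,p)=\sum_k p^k f_k(z)\ \ge\ C_1-(C_2-C_1)\,S^-\ \ge\ C_1-(C_2-C_1)(d-1)\eps .
\]
For $\eps\le\epsmax=\frac{C_1}{(d-1)C_2}$ the right-hand side is bounded below by the strictly positive constant $C_1^2/C_2=:\underline C$; this is exactly the role played by the threshold $\epsmax$ in the statement. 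Combined with $f_k\le C_2$ it follows that the bracket $\frac{f_k(G)}{\overline f(G,\cdot)}-1$ in \eqref{gammaN_def} is bounded in modulus by $C_3:=C_2/\underline C+1$, and hence $\norm{\gammar(\pr,u)}\le C_3\norm{\pr}\le C_3(1+\norm{\pr})$, which is \eqref{growth-delta} with $\rho=C_3$.

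For the Lipschitz estimate I would write $\gammar^k(\pr,u)=\pr^k\,g_k(R\pr,u)$ with $g_k(p,u):=\frac{f_k(G(u,p))}{\overline f(G(u,p),p)}-1$ and split the difference as
\[
\gammar^k(\pr_1,u)-\gammar^k(\pr_2,u)=(\pr_1^k-\pr_2^k)\,g_k(R\pr_1,u)+\pr_2^k\big(g_k(R\pr_1,u)-g_k(R\pr_2,u)\big).
\]
The first summand is bounded by $C_3\norm{\pr_1-\pr_2}$ by the boundedness of $g_k$ just obtained, and $\norm{\pr_2^k}$ is bounded on $\simplexr_\eps$, so it remains to show that $p\mapsto g_k(\,\cdot\,,u)$ is Lipschitz uniformly in $u$. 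As $R$ is affine, this is the same as Lipschitz continuity in $\pr$. The denominator $\overline f(z,p)=\sum_j p^jf_j(z)$ is affine in $p$ and bounded below by $\underline C$, and $f_k$ is Lipschitz in $z$ because it is $C^1$ on the compact set $\mathcal{Z}$; hence, by the quotient rule, the only genuine difficulty is to control the dependence of the \emph{argument} $G(u,p)$ on $p$, that is, to prove a uniform bound $\norm{G(u,p_1)-G(u,p_2)}\le C\norm{p_1-p_2}$.

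This last bound is the main obstacle, and it is here that the explicit recursive construction \eqref{G_def} of the inverse Rosenblatt transform, together with the formulas \eqref{f_margin}--\eqref{Fw_def}, enters. I would argue by induction on the coordinate index $k=1,\ldots,\kappa$, differentiating the defining identities $\widetilde F_k(G(u,p),p)=u_k$ implicitly with respect to $p$. Assumption~\ref{ass_density} guarantees that all marginal and conditional densities built from the $f_k$ are $C^1$ and bounded strictly between positive constants, so that the relevant distribution functions are $C^1$ and strictly increasing in their last argument with derivative bounded below; the implicit function theorem then expresses $\partial_p G_k$ through $\partial_p G_1,\ldots,\partial_p G_{k-1}$ and the uniformly bounded partial derivatives of the densities, and the induction closes because each step only multiplies by bounded factors. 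Since the parameter domain is convex, integrating the resulting uniform derivative bounds over the bounded cube $\mathcal{U}$ yields the desired uniform Lipschitz constant for $G(u,\cdot)$, and inserting it into the product splitting completes the verification of \eqref{lipsch:Delta} with a constant $\rho$. The careful bookkeeping of these recursive derivative estimates is the technically heaviest part of the argument, which is why it is deferred to Appendix~\ref{proof_gamma_lipschitz}.
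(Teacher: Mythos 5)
Your proposal is correct and follows essentially the same route as the paper's proof in Appendix~\ref{proof_gamma_lipschitz}: a uniform positive lower bound on $\overline f(\cdot,R\pr)$ over $\simplexr_\eps$ (this is exactly where $\epsmax$ enters), which immediately gives the growth bound and the boundedness of the bracket, followed by a reduction of the Lipschitz estimate to uniform bounds on the derivatives $\partial G_l/\partial p^j$, obtained from the triangular structure of the Rosenblatt transform together with the upper/lower bounds on the marginal and conditional densities from Assumption~\ref{ass_density} --- your induction via implicit differentiation of $\widetilde F(G(u,p),p)=u$ is precisely forward substitution in the same triangular system that the paper solves in Lemma~\ref{G_deriv_bounded}. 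A minor bonus of your version: by tracking the total negative mass $S^-\le(d-1)\eps$ you obtain the lower bound $C_1^2/C_2$ for $\overline f$, which stays strictly positive at $\eps=\epsmax$ (consistent with the lemma's ``$\eps\le\epsmax$''), whereas the paper's cruder estimate $C_1-\eps(d-1)C_2$ degenerates at that endpoint.
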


\section{Viscosity Solution}
\label{viscosity_sol}

In this section we show that the value function of the control problem~\eqref{value_restr} is a viscosity solution of the dynamic programming equation~\eqref{PDE_restr}. Since it is known from the literature that the  comparison principle holds for these equation (a precise reference is given below) we obtain an interesting characterization of the value function as viscosity solution of \eqref{PDE_restr}. This part of our analysis is based to a large extent on the work of  Pham~\cite{Pham (1998)}.

\paragraph{Preliminaries} The following estimates are crucial in proving that the
value function $V(t,p)$ is a viscosity solution of  \eqref{PDE_restr}.
\begin{proposition}%
\label{estimates} For any $k\in[0,2]$ there exists a constant $C>0$
such that for all $\delta\geq 0,t\in [0,T]$, $\pr,\xi\in\simplexr$,
$h\in \mathcal{H}$ and all stopping times $\tau$ between $t$ and
$T\wedge t+\delta$
\begin{eqnarray}
E_{}\big(~ \big| \pr_{\tau}^{(t,\pr,h)}\big|^k ~\big) &\leq& C(1+\norm{\pr}^k) \label{esti1}\\
E_{}\big(~\big|\pr_{\tau}^{(t,\pr,h)}-\pr\big|^k~\big)&\leq& C(1+\norm{\pr}^k)\delta^{\frac{k}{2}}\label{esti2}\\
E_{}\Big(\Big\{\sup_{t\leq s\leq t+\delta}\norm{\pr_{s}^{(t,\pr,h)}-\pr} \Big\}^k\Big)&\leq& C(1+\norm{\pr}^k) \delta^{\frac{k}{2}}\label{esti3}\\
E_{}\big(~ \big| \pr_{\tau}^{(t,\pr,h)}-\pr_{\tau}^{(t,\xi,h)}
\big|^k~\big)&\leq& C\norm{\pr-\xi}^2\label{esti4}.
\end{eqnarray}
\end{proposition}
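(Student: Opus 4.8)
The plan is to derive all four bounds from the jump-diffusion dynamics \eqref{filter_N} by the standard It\^o--Burkholder--Davis--Gundy--Gronwall scheme, using that Assumption~\ref{coef-model} supplies Lipschitz and linear-growth constants that are uniform in the control $h$. The key reduction is that it suffices to prove each inequality for $k=2$. Indeed, for $k\in[0,2)$ the map $x\mapsto x^{k/2}$ is concave on $[0,\infty)$, so Jensen's inequality gives $E(Z^{k/2})\le (E Z)^{k/2}$ for any nonnegative $Z$; applying this with $Z=\sup_{s}\norm{\,\cdot\,}^2$ (respectively $Z=\norm{\,\cdot\,}^2$) reduces the $k$-th moment bound to the square bound, where one also uses the elementary inequality $(1+\norm{\pr}^2)^{k/2}\le 1+\norm{\pr}^k$. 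Moreover \eqref{esti1} and \eqref{esti2} follow from \eqref{esti3}: since $\tau\le t+\delta$ one has $\norm{\pr_\tau^{(t,\pr,h)}-\pr}\le \sup_{t\le s\le t+\delta}\norm{\pr_s^{(t,\pr,h)}-\pr}$, which gives \eqref{esti2}, and then \eqref{esti1} follows from $\norm{\pr_\tau^{(t,\pr,h)}}^k\le c_k(\norm{\pr_\tau^{(t,\pr,h)}-\pr}^k+\norm{\pr}^k)$ together with $\delta\le T$. Thus the whole proposition rests on the square estimates in \eqref{esti3} and \eqref{esti4}.

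For \eqref{esti3} with $k=2$, I would write $\pr_s^{(t,\pr,h)}-\pr$ as the sum of its drift, Brownian and compensated-jump integrals over $[t,s]$ and bound the square of the running supremum of each summand separately. Cauchy--Schwarz controls the drift by $\delta\int_t^{t+\delta}\norm{\alphar(\pr_r,h_r)}^2\,dr$; Doob's/BDG inequality controls the Brownian term by $C\,E\int_t^{t+\delta}\sum_k\norm{\betar_k(\pr_r)}^2\,dr$; and the $L^2$ moment inequality for the compensated Poisson integral controls the jump term by $C\,E\int_t^{t+\delta}\!\int_{\mathcal U}\norm{\gammar(\pr_{r-},u)}^2\,\nu(du)\,dr$. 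The growth conditions \eqref{growth_cond} and \eqref{growth-delta}, together with $\int_{\mathcal U}\rho^2(u)\nu(du)<\infty$, turn each of these into a multiple of $E\int_t^{t+\delta}(1+\norm{\pr_r}^2)\,dr$. Writing $u(s)=E\big(\sup_{t\le r\le s}\norm{\pr_r^{(t,\pr,h)}-\pr}^2\big)$ and using $\norm{\pr_r}^2\le 2(\norm{\pr}^2+\norm{\pr_r-\pr}^2)$, these combine into $u(s)\le C(1+\norm{\pr}^2)\delta + C\int_t^s u(r)\,dr$, so Gronwall's lemma yields $u(t+\delta)\le C(1+\norm{\pr}^2)\delta\,e^{C\delta}$. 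Since $\delta\le T$, the exponential is absorbed into the constant, proving \eqref{esti3} for $k=2$.

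For \eqref{esti4} I would run the identical scheme on the difference $D_s:=\pr_s^{(t,\pr,h)}-\pr_s^{(t,\xi,h)}$, realized from the same Brownian motion $B$ and the same Poisson measure $N$, so that the difference of the two jump integrals is again the compensated integral of $\gammar(\pr_{r-}^{(t,\pr,h)},u)-\gammar(\pr_{r-}^{(t,\xi,h)},u)$. Here one invokes the Lipschitz bounds of Assumption~\ref{coef-model}, namely \eqref{lipsch:Delta} and the Lipschitz condition on $\alphar$ and $\betar$, which convert every coefficient difference into a multiple of $\norm{D_r}$; the $\sup_{h\in\compactset}$ in that assumption makes the constant independent of the common control $h\in\mathcal H$. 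The same BDG and Gronwall steps, now with initial datum $D_t=\pr-\xi$ in place of the growth term, give $E\big(\sup_{t\le s\le T}\norm{D_s}^2\big)\le C\norm{\pr-\xi}^2$, whence $E(\norm{D_\tau}^2)\le C\norm{\pr-\xi}^2$ for every stopping time $\tau\le T\wedge(t+\delta)$; this is \eqref{esti4} for $k=2$, and the lower-order moments follow once more by Jensen's inequality.

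The only genuinely delicate point is the jump term. One must use a moment inequality for the compensated Poisson integral (BDG or the $L^2$-isometry) and, for \eqref{esti4}, the fact that under the coupling on a common stochastic basis the difference of the two jump terms is itself a compensated stochastic integral; the hypothesis $\int_{\mathcal U}\rho^2\,d\nu<\infty$ is then exactly what bounds it. This also explains the restriction $k\in[0,2]$: only the \emph{second} moment of the jump coefficient is under control, so $k=2$ sits at the top of the attainable scale and all smaller exponents come for free from Jensen, whereas $k>2$ would require higher integrability of $\rho$. The remaining bookkeeping — passing from the supremum bound to an arbitrary stopping time $\tau$, and the uniformity in $h$ — is routine, the former because the integrand is dominated by the running supremum and the latter because the constants in Assumption~\ref{coef-model} already hold uniformly over $\compactset$.
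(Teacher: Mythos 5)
Your proof is correct and rests on the same toolkit as the paper's (Cauchy--Schwarz for the drift term, Doob's inequality and the It\^o--L\'evy isometry for the Brownian and compensated-jump martingales, the uniform Lipschitz/growth bounds of Assumption~\ref{coef-model}, and Gronwall), but the architecture is reversed. The paper proves \eqref{esti1} first, applying Gronwall to the pointwise moment $E\big(|\pr_u^{(t,\pr,h)}|^2\big)$ at deterministic times and then passing to stopping times; \eqref{esti2} is read off from the same computation, and \eqref{esti3} is obtained from Doob plus the isometry \emph{without} a second Gronwall argument, by inserting the already-proved bound \eqref{esti1} into the integrands. You instead prove \eqref{esti3} directly, running Gronwall on the supremum function $u(s)=E\big(\sup_{t\le r\le s}\norm{\pr_r-\pr}^2\big)$ (closing the loop via $\norm{\pr_r}^2\le 2\norm{\pr}^2+2\norm{\pr_r-\pr}^2$), and then derive \eqref{esti2} and \eqref{esti1} as immediate corollaries; for \eqref{esti4} you again use a Doob (Burkholder--Davis--Gundy) supremum bound plus Gronwall, where the paper instead applies It\^o's formula to $\norm{Y_s}^2$ for the difference process and Gronwall at deterministic times, followed by a separate stopping-time step. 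Your organization is somewhat more economical: one Gronwall argument delivers three of the four estimates, and your supremum version of \eqref{esti4} makes the passage to arbitrary stopping times trivial. The paper's ordering avoids Gronwall for supremum functionals, at the cost of separate stopping-time arguments in \eqref{esti1} and \eqref{esti4}. Both routes exploit, as you correctly flag, the coupling of the two solutions through the common Brownian motion and the common \emph{exogenous} Poisson random measure, which is what allows the difference of the jump parts to be written as a single compensated integral.

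One shared caveat, which is not a gap of your attempt relative to the paper: for $k<2$, Jensen (or H\"older) applied to the $k=2$ case of \eqref{esti4} yields $E\big(\norm{\pr_\tau^{(t,\pr,h)}-\pr_\tau^{(t,\xi,h)}}^k\big)\le C\norm{\pr-\xi}^k$, not the stated $C\norm{\pr-\xi}^2$; since $\norm{\pr-\xi}$ can be small, the exponent cannot be upgraded from $k$ to $2$. The paper's own reduction ``by H\"older'' has the identical slippage, and the weaker bound $C\norm{\pr-\xi}^k$ is what is actually used downstream (in the proof of Proposition~\ref{prop_continuity}, where $k=1$), so nothing else is affected.
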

\noindent The proof is given in  Appendix \ref{proof_estimates}.

Next we  state the dynamic programming principle associated to the
control problem \eqref{control_problem_restr}.
\begin{proposition}[Dynamic Programming Principle]\label{bellman-principle}
For $t\in [0,T]$, $\pr\in\simplexr$ and every  stopping time
$\delta$ such that $0\leq \delta\leq T-t$ we have
\begin{eqnarray*}
V(t,\pr)&=&\sup_{h\in \mathcal{H}} E \Big(
\exp\Big\{\int_t^{t+\delta} -b(R\pr_s^{(t,\pr,h)},h_s)ds
\Big\}V\big(t+\delta,\pr_{t+\delta}^{(t,\pr,h)}\big)\Big)
\end{eqnarray*}
\end{proposition}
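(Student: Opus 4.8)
The plan is to establish the two inequalities $V(t,\pr)\le W(t,\pr)$ and $V(t,\pr)\ge W(t,\pr)$, where I abbreviate the right-hand side by
\[W(t,\pr):=\sup_{h\in\mathcal{H}}E\Big(\exp\Big\{\int_t^{t+\delta}-b(R\pr_s^{(t,\pr,h)},h_s;\theta)\,ds\Big\}V\big(t+\delta,\pr_{t+\delta}^{(t,\pr,h)}\big)\Big).\]
First I would collect the structural facts that make the argument work. By Assumption~\ref{coef-model} the SDE~\eqref{filter_N} has for every $h\in\mathcal{H}$ a unique strong solution, and uniqueness yields the flow (cocycle) property $\pr_s^{(t,\pr,h)}=\pr_s^{(t+\delta,\,\pr_{t+\delta}^{(t,\pr,h)},\,h)}$ for $s\ge t+\delta$. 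Since $b$ is continuous and $\simplexr\times\compactset$ is compact, $b$ is bounded, so the exponential weights lie between two positive constants and every expectation below is finite. Crucially, the state equation~\eqref{filter_N} is driven by the \emph{exogenous} noise $(B,N)$, whose increments on $[t+\delta,T]$ are independent of $\mathcal{F}_{t+\delta}$; this is exactly what the reformulation of Sections~\ref{reformulation}--\ref{spec_example} was designed to provide, and it is what allows the strong Markov restart of the controlled process at the stopping time $t+\delta$. Finally, estimate~\eqref{esti4} of Proposition~\ref{estimates} together with the Lipschitz dependence of $b$ on $p$ shows that $\pr\mapsto V(t,\pr)$ is continuous uniformly in $h$, a fact I will need in the harder inequality.

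For the inequality $V\le W$, fix $h\in\mathcal{H}$ and split the objective multiplicatively at $t+\delta$. Conditioning on $\mathcal{F}_{t+\delta}$ and using the flow property together with the independence of the post-$(t+\delta)$ noise, the inner conditional expectation equals $v(t+\delta,\pr_{t+\delta}^{(t,\pr,h)},h')$ for the strategy $h'$ obtained by restarting $h$ at $t+\delta$; since $v(\cdot,\cdot,h')\le V$ this is bounded by $V(t+\delta,\pr_{t+\delta}^{(t,\pr,h)})$ almost surely. Taking expectations gives $v(t,\pr,h)\le W(t,\pr)$, and the supremum over $h$ yields $V(t,\pr)\le W(t,\pr)$.

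For the reverse inequality $V\ge W$, fix $\eps>0$ and an arbitrary $h^1\in\mathcal{H}$. Exploiting the uniform continuity of $V$ and the compactness of $\simplexr$, I would partition $\simplexr$ into finitely many small Borel cells, pick in each cell a representative point $y_j$ and an $\eps$-optimal strategy $h^{2,j}$ with $v(t+\delta,y_j,h^{2,j})\ge V(t+\delta,y_j)-\eps$, and paste: use $h^1$ on $[t,t+\delta]$ and, on $(t+\delta,T]$, the strategy $h^{2,j}$ attached to the cell containing $\pr_{t+\delta}^{(t,\pr,h^1)}$. The resulting control is $\mathbb{F}$-adapted and $\compactset$-valued, hence admissible, and by the Markov restart its reward satisfies $v(t,\pr,h)=E(\exp\{\int_t^{t+\delta}-b\}\,v(t+\delta,\pr_{t+\delta},h^{2,\cdot}))\ge E(\exp\{\int_t^{t+\delta}-b\}\,V(t+\delta,\pr_{t+\delta}))-C\eps$, where the $C\eps$ absorbs both the $\eps$-optimality gap and the oscillation of $V$ (and of $v$) across each cell, which is controlled by making the cells small. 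Since $V(t,\pr)\ge v(t,\pr,h)$ for this pasted $h$, and $h^1$ and $\eps$ are arbitrary, letting $\eps\to0$ gives $V(t,\pr)\ge W(t,\pr)$.

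The main obstacle is the rigorous justification of the ``Markov restart'' at the stopping time $t+\delta$ underlying both directions: one must show, via regular conditional probabilities and the independence of the driving noise after $t+\delta$ from $\mathcal{F}_{t+\delta}$, that the conditional law of $(\pr_s^{(t,\pr,h)})_{s\ge t+\delta}$ given $\mathcal{F}_{t+\delta}$ coincides with the law of a freshly started solution from $\pr_{t+\delta}^{(t,\pr,h)}$; and, in the second inequality, that the measurable pasting genuinely produces an admissible strategy. The continuity of $V$ delivered by Proposition~\ref{estimates} is what keeps the measurable-selection step elementary, replacing a general selection theorem by a finite partition of the compact simplex $\simplexr$.
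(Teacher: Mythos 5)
The paper itself does not prove Proposition \ref{bellman-principle}: it simply refers to Proposition~3.1 of Pham \cite{Pham (1998)}. So your proposal must be judged on its own merits. Its skeleton --- the two inequalities $V\le W$ and $V\ge W$ for your $W$, the flow/uniqueness property of the SDE \eqref{filter_N}, conditioning on $\mathcal{F}_{t+\delta}$ for the easy direction, and pasting $\eps$-optimal controls over a finite partition of the compact set $\simplexr$ for the hard direction --- is the standard scheme for such results, and your observation that only the spatial continuity coming from \eqref{esti4} may be used (continuity of $V$ in $t$, Proposition \ref{prop_continuity}, is derived \emph{from} the DPP in this paper and would be circular) is exactly right, as is your emphasis on the exogenous noise $(B,N)$ as the structural prerequisite for the restart argument.

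There is, however, a genuine gap: the proposition allows $\delta$ to be a \emph{stopping time}, and your selection argument only works when $t+\delta$ is deterministic. You choose finitely many strategies $h^{2,j}$ with $v(t+\delta,y_j,h^{2,j})\ge V(t+\delta,y_j)-\eps$; when $\delta$ is random this is a statement about a random initial time, and a single fixed $h^{2,j}$ cannot be expected to be $\eps$-optimal for $(s,y_j)$ simultaneously for all values $s$ in the (possibly uncountable) range of $t+\delta$ --- near-optimal controls depend on the starting time as well as the starting point. Repairing this requires either (i) proving the DPP first for countably-valued stopping times (conditioning on each event $\{t+\delta=s_i\}$) and then approximating a general $\delta$, which needs a regularity-in-time argument for $V$ that you cannot borrow from Proposition \ref{prop_continuity} without circularity, or (ii) a genuine measurable-selection argument jointly in $(s,y)$, which is precisely the machinery your finite partition was designed to avoid. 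Relatedly, the ``Markov restart'' at $t+\delta$ --- identifying the conditional expectation given $\mathcal{F}_{t+\delta}$ with a reward function evaluated at $(t+\delta,\pr_{t+\delta}^{(t,\pr,h)})$, and verifying admissibility of the conditioned and pasted controls --- is named by you as the main obstacle but never carried out; since both inequalities hinge on it, flagging it does not discharge it. (Note, incidentally, that the paper only ever invokes the DPP with deterministic $\delta$, in the proof of Proposition \ref{viscosity}, so the deterministic-time case that your argument essentially covers is the one actually used downstream.)
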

\noindent For the proof of dynamic programming  principle we refer to Pham
\cite{Pham (1998)}, Proposition~3.1.

Applying  the dynamic programming principle yields the next
proposition on the continuity of the value function. The proof is
given in Appendix \ref{proof_prop_continuity}.
\begin{proposition}
\label{prop_continuity} There exists a constant $C>0$ such that for
all $t,s\in[0,T]$ and $\pr_1,\pr_2\in\simplexr$
\begin{eqnarray}\label{continuity}
|V(t,\pr_1)-V(s,\pr_2)|\leq
C\big[(1+|\pr_1|)|t-s|^{\frac{1}{2}}+|\pr_1-\pr_2|\big].
\end{eqnarray}
\end{proposition}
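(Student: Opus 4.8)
The plan is to establish two separate one-variable estimates---Lipschitz continuity of $V$ in the state $\pr$ and $\tfrac12$-Hölder continuity of $V$ in time---and then to combine them through the triangle inequality
\[
|V(t,\pr_1)-V(s,\pr_2)|\le |V(t,\pr_1)-V(s,\pr_1)|+|V(s,\pr_1)-V(s,\pr_2)|,
\]
where the first term is treated by the temporal estimate (at the fixed point $\pr_1$, producing the factor $1+|\pr_1|$) and the second by the spatial estimate.

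For the spatial estimate I would fix $t$ and use the elementary bound $|\sup_h a_h-\sup_h c_h|\le\sup_h|a_h-c_h|$ to reduce to controlling $|v(t,\pr_1,h)-v(t,\pr_2,h)|$ uniformly in $h$. Because $R\pr_s\in\simplex$ and $h_s\in\compactset$, the integrand $b(R\pr_s,h_s)$ is uniformly bounded, so the two exponents $\int_t^T b\,ds$ appearing in $v$ lie in a fixed bounded set; the inequality $|e^{-a}-e^{-a'}|\le C|a-a'|$ on that set together with the uniform (in $h\in\compactset$) Lipschitz continuity of $b(\cdot,h)$ on the simplex then reduces the estimate to bounding $E\int_t^T|\pr_s^{(t,\pr_1,h)}-\pr_s^{(t,\pr_2,h)}|\,ds$. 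Estimate~\eqref{esti4} with $k=2$ and the Cauchy--Schwarz inequality give $E|\pr_s^{(t,\pr_1,h)}-\pr_s^{(t,\pr_2,h)}|\le C|\pr_1-\pr_2|$ uniformly in $s$ and $h$, hence $|V(t,\pr_1)-V(t,\pr_2)|\le C|\pr_1-\pr_2|$.

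For the temporal estimate I fix $\pr$, assume $t\le s$, and invoke the dynamic programming principle (Proposition~\ref{bellman-principle}) with the deterministic increment $\delta=s-t$, which writes $V(t,\pr)$ as the supremum over $h$ of $E\big(\exp\{\int_t^s-b(R\pr_r^{(t,\pr,h)},h_r)dr\}\,V(s,\pr_s^{(t,\pr,h)})\big)$. Writing $\pr_s=\pr_s^{(t,\pr,h)}$, applying $|\sup_h a_h-B|\le\sup_h|a_h-B|$ with $B=V(s,\pr)$, and inserting the decomposition
\[
\exp\{\textstyle\int_t^s-b\,dr\}V(s,\pr_s)-V(s,\pr)=\big(\exp\{\textstyle\int_t^s-b\,dr\}-1\big)V(s,\pr_s)+\big(V(s,\pr_s)-V(s,\pr)\big),
\]
I would bound the two pieces in turn. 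The first is controlled by the boundedness of $V$ (itself a consequence of the boundedness of $b$) and by $|\exp\{\int_t^s-b\,dr\}-1|\le C(s-t)\le C(s-t)^{1/2}$, valid since $b$ is bounded and $s-t\le T$. The second is controlled by the spatial Lipschitz bound just proved, $|V(s,\pr_s)-V(s,\pr)|\le C|\pr_s-\pr|$, and then by estimate~\eqref{esti2} with $k=1$, which yields $E|\pr_s-\pr|\le C(1+|\pr|)(s-t)^{1/2}$. Together these give $|V(t,\pr)-V(s,\pr)|\le C(1+|\pr|)|t-s|^{1/2}$.

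The exponential inequalities and the Lipschitz property of $b$ are routine; the genuinely delicate point is the temporal estimate. There one must handle the supremum over the large control class $\mathcal{H}$ correctly via the $|\sup-\sup|\le\sup|\cdot|$ device, and, more importantly, split the increment precisely into the ``discount'' contribution of the exponential factor and the ``state'' contribution of $V$ at the displaced point $\pr_s$, so that the already-established spatial estimate and the moment bound~\eqref{esti2} each apply to exactly one piece. A final point requiring attention is the uniformity of all constants in $t,s,\pr,h$; this holds because they depend only on $T$, on $\sup_{\simplex\times\compactset}|b|$ and the Lipschitz constant of $b$, on the sup-bound for $V$, and on the constant of Proposition~\ref{estimates}.
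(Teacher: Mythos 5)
Your proposal is correct and follows essentially the same route as the paper's own proof: boundedness of $b$ and $V$, spatial Lipschitz continuity via the bound $|\sup_h(\cdot)-\sup_h(\cdot)|\le\sup_h|\cdot|$ together with estimate \eqref{esti4}, and temporal $\tfrac12$-H\"older continuity via the dynamic programming principle combined with the moment estimates of Proposition~\ref{estimates}. The only (immaterial) differences are that you invoke \eqref{esti4} with $k=2$ plus Cauchy--Schwarz where the paper uses $k=1$ and boundedness of $\simplexr$, and that in the temporal step you attach the exponential factor to the other term of the decomposition than the paper does.
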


\paragraph{Viscosity Solution}
Following Pham \cite{Pham (1998)} we adapt the notion
of a viscosity solution introduced by Crandall and Lions
\cite{Crandall and Lions (1983)} to the case of integro-differential equations.
This concept consists  in  interpreting
equation \eqref{PDE_restr}  in a weaker sense. To simplify notation we split
the generator $\mathcal{L}^{h}$ given in \eqref{generator_restr} into
\[\mathcal{L}^{h} g(\pr)= \mathcal{A}^{h}g(\pr)+\mathcal{B}(\pr)\] where for $g\in C^2 (\simplexr)$
 the linear second-order differential operator $\mathcal{A}^h$ is defined by
\[
\mathcal{A}^{h}
g(\pr)=\frac{1}{2}\sum_{i,j=1}^{d-1}\betar_i^{\top}(\pr)\betar_j(\pr)g_{\pr^i\pr^j}(\pr)
+ \sum_{i=1}^{d-1}\alphar^i(\pr,h) g_{\pr^i} (\pr)
\]
and $\mathcal{B}$ is the integral operator
\[
\mathcal{B}g(\pr)=\lambda\int_{\mathcal{U}}\{g(\pr+\gammar(\pr,u))-g(\pr)\}\nu(du).
\]
Moreover $D_\pr g$ and $D^2_\pr g$ denote the gradient and Hessian
matrix of $g$ w.r.t~$\pr$.
\begin{definition}
\label{vis_defi}
(1)  A function  $V\in\mathcal{C}^0([0,T]\times\mathcal{S})$ is a
viscosity {\em supersolution} ({\em subsolution}) of equation
\eqref{bell-equa0} if
\begin{eqnarray}\label{sup-sub-sol}
-\frac{\partial \psi}{\partial t}(\tq,\piq)
-\sup_{h\in\compactset}\Big(-b(R\piq,h)V(\tq,\piq)+\mathcal{A}^{h}\psi(\tq,\piq)\Big)-\mathcal{B}\psi(\tq,\piq)\ge
0
\end{eqnarray}
(resp. $\leq 0$) for all $(\tq,\piq)\in [0,T]\times \simplex$ and
for all $\psi\in\mathcal{C}^{1,2}([0,T]\times\mathcal{S})$ with Lipschitz continuous derivatives $\psi_t, D_\pr^2\psi$ such that
$(\tq,\piq)$ is a global minimizer (maximizer) of the difference
$V-\psi$ on $[0,T]\times \simplex$ with $V(\tq,\piq)=\psi(\tq,\piq)$.
\vspace{0.1cm}

\noindent (2) $V$ is a viscosity solution of \eqref{bell-equa0} if it
is both super and subsolution of that equation.
\end{definition}

\begin{proposition}[Viscosity solution]
\label{viscosity} The value function $V(t,\pr)$ associated to the
optimization problem (\ref{eq:objective-1}) is a viscosity solution
of $(\ref{bell-equa0})$
\end{proposition}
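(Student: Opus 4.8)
The plan is to follow the classical program of Pham \cite{Pham (1998)} and verify the super- and subsolution properties separately, in both cases combining the dynamic programming principle (Proposition~\ref{bellman-principle}) with Itô's formula for the jump-diffusion \eqref{filter_N}. Throughout I abbreviate the discount factor by $\Gamma^h_s = \exp\{-\int_{\tq}^s b(R\pr_u^{(\tq,\piq,h)},h_u)\,du\}$, which is bounded away from $0$ and $\infty$ on $[\tq,\tq+\delta]$ since $b$ is bounded on $\simplexr\times\compactset$. Proposition~\ref{prop_continuity} already gives $V\in\mathcal C^0$, so it only remains to test against $\psi\in\mathcal C^{1,2}$ with Lipschitz $\psi_t,D^2_\pr\psi$. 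The identity behind every step is Dynkin's formula
\[
E\big(\Gamma^h_{\tq+\delta}\,\psi(\tq+\delta,\pr_{\tq+\delta})\big)=\psi(\tq,\piq)+E\int_{\tq}^{\tq+\delta}\Gamma^h_s\big(\psi_t-b\,\psi+\mathcal L^{h_s}\psi\big)(s,\pr_s)\,ds,
\]
where the Brownian integral $\int\Gamma^h_s (D_\pr\psi)^\top\betar^\top\,dB$ and the compensated-jump integral $\int\Gamma^h_s\{\psi(\pr_{s-}+\gammar)-\psi(\pr_{s-})\}\,\widetilde N$ are genuine martingales of zero mean; this is where the moment bounds \eqref{esti1}, \eqref{esti3} and the integrability $\int_{\mathcal U}\rho^2\,d\nu<\infty$ from Assumption~\ref{coef-model} enter.

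For the supersolution property let $(\tq,\piq)$ be a global minimizer of $V-\psi$ with $V(\tq,\piq)=\psi(\tq,\piq)$, so that $V\ge\psi$ everywhere. Fixing an arbitrary constant control $h\in\compactset$ and a deterministic $\delta>0$, the inequality part of Proposition~\ref{bellman-principle} (valid for each fixed control) gives $\psi(\tq,\piq)=V(\tq,\piq)\ge E(\Gamma^h_{\tq+\delta}V(\tq+\delta,\pr_{\tq+\delta}))\ge E(\Gamma^h_{\tq+\delta}\psi(\tq+\delta,\pr_{\tq+\delta}))$, the last step using $\Gamma^h>0$ and $V\ge\psi$. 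Subtracting $\psi(\tq,\piq)$ and inserting Dynkin's formula yields $E\int_{\tq}^{\tq+\delta}\Gamma^h_s(\psi_t-b\psi+\mathcal L^{h}\psi)(s,\pr_s)\,ds\le0$. Dividing by $\delta$ and letting $\delta\downarrow0$, the right-continuity of $s\mapsto\pr_s$ with $\pr_{\tq}=\piq$ together with the boundedness and continuity of the integrand let me pass to the limit by bounded convergence and obtain $(\psi_t-b(R\piq,h)\psi+\mathcal L^{h}\psi)(\tq,\piq)\le0$. As this holds for every $h\in\compactset$, taking the supremum and recalling $V=\psi$ at $(\tq,\piq)$ gives exactly the supersolution inequality of Definition~\ref{vis_defi}.

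For the subsolution property let $(\tq,\piq)$ be a global maximizer of $V-\psi$ with equality there, so $V\le\psi$ everywhere, and argue by contradiction: suppose $\psi_t(\tq,\piq)+\sup_{h\in\compactset}(-b(R\piq,h)\psi+\mathcal L^{h}\psi)(\tq,\piq)=-2\eta<0$. The map $(s,\pr,h)\mapsto(\psi_t-b\psi+\mathcal L^{h}\psi)(s,\pr)$ is jointly continuous, crucially including the nonlocal part $\mathcal B\psi(\pr)=\lambda\int_{\mathcal U}\{\psi(\pr+\gammar(\pr,u))-\psi(\pr)\}\nu(du)$, which is continuous in $\pr$ because $\psi$ is globally $\mathcal C^{1,2}$ on $[0,T]\times\simplexr$ and the filter jumps $\pr+\gammar$ stay inside $\simplexr$; so by compactness of $\compactset$ there is a space-time neighborhood $\mathcal N$ of $(\tq,\piq)$ on which $\sup_{h}(\psi_t-b\psi+\mathcal L^{h}\psi)\le-\eta$. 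Set $\tau=\inf\{s\ge\tq:(s,\pr_s)\notin\mathcal N\}\wedge(\tq+\delta)$ with $\delta$ small enough that $[\tq,\tq+\delta]$ lies in the time-projection of $\mathcal N$. For $\epsilon>0$ the near-optimality part of Proposition~\ref{bellman-principle} (applied with the stopping time $\tau$) provides $h^\epsilon$ with $\psi(\tq,\piq)=V(\tq,\piq)\le E(\Gamma^{h^\epsilon}_\tau V(\tau,\pr_\tau))+\epsilon\le E(\Gamma^{h^\epsilon}_\tau\psi(\tau,\pr_\tau))+\epsilon$. Applying Dynkin's formula up to $\tau$ and using that $(s,\pr_s)\in\mathcal N$ for $s<\tau$ gives $E(\Gamma^{h^\epsilon}_\tau\psi(\tau,\pr_\tau))\le\psi(\tq,\piq)-\eta\,E\int_{\tq}^\tau\Gamma^{h^\epsilon}_s\,ds$, whence $\eta\,E\int_{\tq}^\tau\Gamma^{h^\epsilon}_s\,ds\le\epsilon$. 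Since $\tau>\tq$ almost surely and $\Gamma^{h^\epsilon}$ is bounded below by a positive constant, the left-hand side is a fixed positive number independent of $\epsilon$, so choosing $\epsilon$ small produces a contradiction, establishing the subsolution inequality.

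The step I expect to be most delicate is the treatment of the nonlocal operator $\mathcal B$ in the subsolution argument. Unlike the purely differential case, a jump by $\gammar$ may carry the process out of any prescribed neighborhood, so the local sign information $\sup_h(\cdots)\le-\eta$ cannot be used pointwise along the jumps. The resolution, which is the heart of the matter and the very reason for working with globally defined test functions and global extremizers in Definition~\ref{vis_defi}, is that $\mathcal B\psi$ enters Dynkin's formula only through its value at the current, in-neighborhood position $\pr_s$, whereas the potentially distant value $\psi(\pr_s+\gammar)$ is controlled once and for all by the global inequality $V\le\psi$; consequently only $\pr_s$ itself must remain in $\mathcal N$, which the stopping time $\tau$ guarantees. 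A secondary technical point is the genuine-martingale property of the two stochastic integrals in Dynkin's formula and the legitimacy of the limit $\delta\downarrow0$, both of which rest on Proposition~\ref{estimates} and on $\int_{\mathcal U}\rho^2\,d\nu<\infty$.
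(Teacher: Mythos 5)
Your supersolution argument is correct and is in substance the same as the paper's, differing only in execution: you fix a constant control, use the one-sided dynamic programming inequality, Dynkin's formula and bounded convergence, whereas the paper keeps the supremum over all of $\mathcal{H}$ and controls the error of freezing the integrand at $(\tq,\piq)$ quantitatively via Proposition~\ref{estimates}. Both routes are fine there.

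The gap is in the last step of your subsolution argument. You conclude from ``$\tau>\tq$ almost surely and $\Gamma^{h^\epsilon}$ is bounded below'' that $\eta\,E\int_{\tq}^{\tau}\Gamma^{h^\epsilon}_s\,ds$ is \emph{a fixed positive number independent of $\epsilon$}. This is not justified: the state process, and hence the exit time $\tau=\tau^{\epsilon}$, depends on the $\epsilon$-optimal control $h^{\epsilon}$, so this expectation varies with $\epsilon$ and could a priori tend to $0$ along the sequence of controls (nothing you have said rules out that the $h^{\epsilon}$ drive the state out of $\mathcal{N}$ ever faster). What the contradiction actually requires is a lower bound that is \emph{uniform over all admissible controls}. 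This can be supplied from the paper's own estimates: by \eqref{esti3} with $k=2$ and Chebyshev's inequality, $P\big(\sup_{\tq\le s\le \tq+\delta}\,\big|\pr_s^{(\tq,\piq,h)}-\piq\big|\ge r\big)\le C(1+\norm{\piq}^2)\,\delta/r^2$ uniformly in $h\in\mathcal{H}$; so if $[\tq,\tq+\delta]\times\{\norm{\pr-\piq}<r\}\subset\mathcal{N}$ and $\delta$ is chosen small enough that $C(1+\norm{\piq}^2)\,\delta/r^2\le \tfrac12$, then $\tau^{\epsilon}=\tq+\delta$ with probability at least $\tfrac12$ for \emph{every} control, whence $E\int_{\tq}^{\tau^{\epsilon}}\Gamma_s\,ds\ge c\,\delta/2$ for all $\epsilon$, and taking $\epsilon<\eta c\delta/2$ closes the argument. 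With this insertion your proof works, but note that it is genuinely different from the paper's: the paper's subsolution proof uses no stopping times and no $\epsilon$-optimal controls at all; it mirrors the supersolution proof by keeping $\sup_{h\in\mathcal{H}}$, freezing the integrand at $(\tq,\piq)$ with an error $\delta\varepsilon(\delta)$ that is uniform in $h$ (again via Proposition~\ref{estimates}), bounding the time-dependent control pointwise by $\sup_{h\in\compactset}$, and letting $\delta\downarrow 0$ — so the required uniformity in the control enters only through the constants in those estimates rather than through a separate exit-time bound.
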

\begin{proof}

\paragraph{Supersolution inequality}
Let be $\psi$ such that
\begin{eqnarray}\label{supsolution}
0=(V-\psi)(\tq,\piq)=\min_{[0,T]\times \mathcal{S}}(V-\psi).
\end{eqnarray}
We apply the dynamic programming principle for a fixed time
$\delta\in [0,T-t]$ to get
\begin{equation*}
V(\tq,\piq)=\psi(\tq,\piq)=\sup_{h\in \mathcal{H}} E \Big(
\exp\Big\{\int_\tq^{\tq+\delta} -b(R\pr_s^{(\tq,\piq,h)},h_s)ds
\Big\}V(\tq+\delta,\pr^{(\tq,\piq,h)}_{\tq+\delta})\Big).
\end{equation*}
From $(\ref{supsolution})$ we obtain
\begin{equation}
\label{vis1} 0 \geq \sup_{h\in \mathcal{H}} E \Big(
\exp\Big\{\int_\tq^{\tq+\delta} -b(R\pr_s^{(\tq,\piq,h)},h_s)ds
\Big\}\psi(\tq+\delta,\pr^{(\tq,\piq
,h)}_{\tq+\delta})-\psi(\tq,\piq)\Big).
\end{equation}
We now define for $u\in [\tq,T]$
\begin{equation}
\label{eta_Z} \eta_u:= \exp\Big\{\int_\tq^{u}
-b(R\pr_s^{(\tq,\piq,h)},h_s)ds\Big\} \quad\text{and}\quad
Z_u:=\eta_u\, \psi(u,\pr_{u}^{(\tq,\piq ,h)}).
\end{equation}
 Then, we apply It\^{o}'s formula to
$Z_{\tq+\delta}$, where we use the shorthand  notation  $\pr_{s}$
for $\pr_{s}^{(\tq,\piq,h)}$. Since $dZ_t = - b(R\pr_t,h_t)\eta_t\psi(t,\pr_{t})dt + \eta_t d\psi(t,\pr_{t})$, we have
\begin{eqnarray*}
Z_{\tq+\delta} &=&\psi(\tq,\piq
)+\int_\tq^{\tq+\delta}-b(R\pr_s,h_s)\eta_s\psi(s,\pr_{s})ds\\
&&+\int_\tq^{\tq+\delta}\eta_s\big\{\psi_t(s,\pr_{s})+\mathcal{A}^h\psi(s,\pr_{s})
+ \mathcal{B}\psi(s,\pr_{s})\big\}ds\\
&&+\int_\tq^{\tq+\delta}\eta_sD_\pr\psi(s,\pr_{s})\beta^{\top}(\pr_{s})dB_s
\\&&+\int_\tq^{\tq+\delta}\eta_s\int_\mathcal{U}
\big(\psi(s,\pr_{s}+
\gamma(\pr_{s},u))-\psi(s,\pr_{s})\big)\widetilde{N}(ds\times du).
\end{eqnarray*}
Due to our assumptions on $b$ and $\psi$, the last two  terms are martingales with zero expectations. From \eqref{vis1} we therefore  obtain
\begin{eqnarray}
\nonumber
0&\geq&\sup_{h\in \mathcal{H}} E(Z_{\tq+\delta} - \psi(\tq,\piq))\\
\label{eq1} &=&\sup_{h\in \mathcal{H}} E\Big(\int_\tq^{\tq+\delta}
-b(R\pr^{(\tq,\piq
,h)}_s,h_s)\eta_s\psi(s,\pr^{(\tq,\piq,h)}_{s})ds\\
\nonumber && \hspace*{8mm} + \int_\tq^{\tq+\delta}
\eta_s\{\psi_t(s,\pr^{(\tq,\piq,h)}_{s}) +
\mathcal{A}^{h}\psi(s,\pr^{(\tq,\piq,h)}_{s})+
\mathcal{B}\psi(s,\pr^{(\tq,\piq,h)}_{s})\}ds \Big).
\end{eqnarray}
We now show for the first integral that
\begin{eqnarray}
\nonumber \lefteqn{
E\Big(\int_\tq^{\tq+\delta}  -b(R\pr^{(\tq,\piq,h)}_{s},h_s)\eta_s\psi(s,\pr^{(\tq,\piq,h)}_{s})ds\Big) \geq }  \hspace*{30mm}\\
\label{eq2} &  & E\Big(\int_\tq^{\tq+\delta}
-b(R\pr,h_s)\psi(\tq,\piq)ds\Big)-\delta\varepsilon(\delta).
\end{eqnarray}
where $\varepsilon(\delta)\to 0$ as $\delta\to 0$. Using the
Lipschitz continuity of $\psi$ we obtain the inequality
$
|\psi(s,\pr^{(\tq,\piq,h)}_{s})-\psi(\tq,\piq)|\leq  C(
|s-\tq|+|\pr^{(\tq,\piq ,h)}_{s}-\piq |)
$,
which leads to
\begin{eqnarray*}
\lefteqn{\sup_{h\in \mathcal{H}}
E\Big(\int_\tq^{\tq+\delta}-b(R\pr^{(\tq,\piq,h)}_s,h_s)\eta_s\psi(s,\pr^{(\tq,\piq,h)}_{s})\Big)ds\geq}
\\
&& \hspace*{-3mm}
\sup_{h\in \mathcal{H}}
E\Big(\int_\tq^{\tq+\delta}-b(R\pr^{(\tq,\piq
,h)}_s,h_s)\eta_s\psi(\tq,\piq)ds\Big) -C\delta \Big\{\delta+
E\Big(\sup_{0\leq s\leq
\delta}|\pr^{(\tq,\piq,h)}_{s}-\piq|\Big)\Big\}\Big).
\end{eqnarray*}
By Proposition \ref{estimates} we have  $E\Big(\sup_{0\leq s\leq
\delta}|\pr^{(\tq,\piq ,h)}_{s}-\piq|\Big)\leq C (1+\norm{\piq
})\delta^{\frac{1}{2}}$ and hence we obtain
\begin{eqnarray}
\nonumber
\lefteqn{E\Big(\int_\tq^{\tq+\delta}-b(R\pr^{(\tq,\piq,h)}_s,h_s)\eta_s\psi(s,\pr^{(\tq,\piq,h)}_{s})ds\Big)\geq} \hspace*{25mm}\\
\label{eq2a}
&&E\Big(\int_\tq^{\tq+\delta}-b(R\pr^{(\tq,\piq,h)}_s,h_s)\eta_s\psi(\tq,\piq
)ds\Big)-\delta\varepsilon(\delta).
\end{eqnarray}
Recall from (\ref{eq:def-b}) that  $b(R\pr,h) = -\theta\Big(h^{\top} M R \pr - \frac{1-\theta}{2}
\norm{\sigma^{\top} h }^2\Big)$. Since this expression  depends
linearly on $\pr$ and $h_s$ and since $h_s$ takes values in the compact set
$\compactset$ we have
\[
|b(R\pr^{(\tq,\piq,h)}_s,h_s)-b(R\pr,h_s)|\leq
C|\pr^{(\tq,\piq,h)}_s-\pr|.
\]
Using $|\eta_s-\eta_\tq|=|\eta_s-1|\le C|s-\tq|$ and
 the same computations to get (\ref{eq2a}) it yields
\[E\Big(\int_\tq^{\tq+\delta} \hspace*{-4mm}
-b(R\pr^{(\tq,\piq,h)}_s,h_s)\eta_s\psi(s,\pr^{(\tq,\piq
,h)}_{t+s})ds\Big) \geq E\Big(\int_\tq^{\tq+\delta} \hspace*{-4mm}
-b(R\piq,h_s)\psi(\tq,\piq)ds\Big)-\delta\varepsilon(\delta).\]
Applying similar  computations to the other terms in $(\ref{eq1})$
by using the estimates for the state process $\pr$ and the Lipschitz
continuity of $D^2_\pr \psi$ we obtain
\begin{eqnarray*}
\varepsilon(\delta)&\geq&\frac{1}{\delta}\sup_{h\in \mathcal{H}}
E\Big(\int_\tq^{\tq+\delta}\{-b(\piq,h_s)\psi(\tq,\piq)+
\psi_t(\tq,\piq)+ \mathcal{A}^{h}\psi(\tq,\piq)+
\mathcal{B}\psi(\tq,\piq)\}ds\Big).
\end{eqnarray*}
Replacing $h\in\mathcal{H}$ by a constant strategy in the above sup
we get
\[
\varepsilon(\delta) \geq
\frac{1}{\delta}\Big(\int_\tq^{\tq+\delta}\sup_{h\in\compactset}\Big(-b(R \piq,h_s)\psi(\tq,\piq)+\psi_t(\tq,\piq)
+\mathcal{A}^{h}\psi(\tq,\piq)+ \mathcal{B}\psi(\tq,\piq)\Big)ds.
\]
Applying the mean value theorem and  sending $\delta$ to $0$  we get the supersolution viscosity
inequality:
\begin{eqnarray*}
-\frac{\partial \psi}{\partial
t}(\tq,\piq)-\sup_{h\in\compactset}\Big(-b(R \piq,h)V(\tq,\piq)+A^{h}\psi(\tq,\piq)\Big)-\mathcal{B}\psi(\tq,\piq)\geq
0.
\end{eqnarray*}
\paragraph{Subsolution inequality}
Let $\psi$ be such that
\begin{eqnarray}\label{subsolution}
0=(V-\psi)(\tq,\piq)=\max_{[0,T]\times \mathcal{S}}(V-\psi)
\end{eqnarray}
As a consequence of dynamic programming  principle in Proposition
$\ref{bellman-principle}$ we have
\begin{eqnarray*}
V(\tq,\piq)&=&\sup_{h\in \mathcal{H}} E \Big(
\exp\Big\{\int_\tq^{\tq+\delta} -b(R\pr_s^{(\tq,\piq,h)},h_s)ds
\Big\}V(t+\delta,\pr^{(\tq,\piq ,h)}_{t+\delta})\Big)
\end{eqnarray*}
Equation $(\ref{subsolution})$ implies that
\begin{eqnarray*}
0&\leq&\sup_{h\in \mathcal{H}} E \Big(
\exp\Big\{\int_\tq^{\tq+\delta} -b(R \pr_s^{(\tq,\piq,h)},h_s)ds
\Big\}\psi(t+\delta,\pr^{(\tq,\piq,h)}_{t+\delta})-\psi(\tq,\piq)\Big).
\end{eqnarray*}
Using similar computations by applying It\^{o}'s formula to the
process $Z_u$ given in (\ref{eta_Z}) and using the estimates for the
state process $\pr$ we obtain
\begin{eqnarray*}
\varepsilon(\delta)&\leq&\frac{1}{\delta}\sup_{h\in \mathcal{H}}
E\Big(\int_\tq^{\tq+\delta} \{-b(R \piq,h_s)\psi(\tq,\piq)
+\psi_t(\tq,\piq) + \mathcal{A}^{h}\psi(\tq,\piq)
+\mathcal{B}\psi(\tq,\piq)\}ds\Big).\nonumber
\end{eqnarray*}
Replacing $h\in\mathcal{H}$ by a constant strategy in the above sup,
applying mean the value theorem and  sending $\delta$ to $0$  we obtain the subsolution viscosity inequality
\[
-\psi_t(\tq,\piq)-\sup_{h\in \compactset}\Big(-b(R\piq,h)V(\tq,\piq)+\mathcal{A}^{h}\psi(\tq,\piq)\Big)\nonumber\\
-\mathcal{B}\psi(\tq,\piq) \le 0.
\]
\end{proof}

\paragraph{Comparison principle} Here we quote the following result, which is Theorem~4.1 of \cite{Pham  (1998)}.
\begin{theorem} \label{thm:comparison} Suppose that Assumption~\ref{coef-model} holds and that $u_1$ and $u_2$ are continuous functions on $[0,T] \times \simplexr $ such that $u_1$ is a subsolution and $u_2$ is a supersolution of the dynamic programming equation (\ref{bell-equa0}). If $u_1(T,\pi) \le u_2(T,\pi)$ for all $\pi \in \simplexr$, then
$$ u_1(t,\pi) \le u_2(t,\pi) \text{ for all }  (t,\pi ) \in [0,T] \times \simplexr \,.$$
\end{theorem}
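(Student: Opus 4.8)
Since the statement is quoted from Pham~\cite{Pham (1998)}, strictly speaking the task reduces to checking that equation~\eqref{bell-equa0} fits his framework; the substantive content, however, is a comparison principle for viscosity solutions of a degenerate elliptic integro-differential HJB equation, and the plan is to prove it by the doubling-of-variables technique of Crandall--Ishii--Lions adapted to the nonlocal setting. Because the coefficient matrix $C(\pr)=\betar^{\top}\betar(\pr)$ need not be strictly positive definite, the argument cannot exploit any ellipticity; it must rest entirely on the structure conditions of Assumption~\ref{coef-model}, namely the Lipschitz continuity of $\alphar$ and $\betar$ and the bound $|\gammar(\pr_1,u)-\gammar(\pr_2,u)|\le\rho(u)|\pr_1-\pr_2|$ with $\int_{\mathcal{U}}\rho^2(u)\nu(du)<\infty$.

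First I would argue by contradiction, assuming that $m:=\max_{[0,T]\times\simplexr}(u_1-u_2)>0$; the maximum is attained since $\simplexr$ is compact and $u_1,u_2$ are continuous. As a preliminary reduction I would replace $u_1$ by a perturbation $u_1^\eta:=u_1-\eta w$ with $w>0$ smooth in $t$ and $\eta>0$ small, chosen so that $u_1^\eta$ is a \emph{strict} subsolution (the subsolution inequality holds with a margin of order $\eta$) while the terminal ordering $u_1^\eta(T,\cdot)\le u_2(T,\cdot)$ is preserved; the strictness will later absorb the lower-order error terms. I would then double the variables by introducing, for $\eps>0$,
\[
\Phi(t,\pr,s,\xi)=u_1^\eta(t,\pr)-u_2(s,\xi)-\frac{1}{2\eps}\big(|\pr-\xi|^2+|t-s|^2\big),
\]
and picking a maximizer $(\bar t,\bar\pr,\bar s,\bar\xi)$ over the compact domain. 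Standard penalization estimates give $\frac{1}{\eps}\big(|\bar\pr-\bar\xi|^2+|\bar t-\bar s|^2\big)\to0$ and show that, along a subsequence, the maximizers converge to a common point at which $u_1^\eta-u_2$ attains its positive maximum; the terminal-data inequality forces this point into the parabolic interior $t<T$. Applying the nonlocal version of the maximum principle for semicontinuous functions (the integro-differential Crandall--Ishii lemma used by Pham~\cite{Pham (1998)}) produces Hessian matrices $X,Y$ with $X\le Y$ in the sense of the usual matrix inequality controlled by $\eps^{-1}|\bar\pr-\bar\xi|^2$.

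The decisive step is to subtract the strict subsolution inequality for $u_1^\eta$ at $(\bar t,\bar\pr)$ from the supersolution inequality for $u_2$ at $(\bar s,\bar\xi)$. The zeroth- and first-order contributions are controlled by the linear dependence of $b(R\pr,h;\theta)$ on $\pr$ and by the Lipschitz continuity of $\alphar$, producing terms of order $|\bar\pr-\bar\xi|$ that are uniform in $h\in\compactset$; the second-order contribution $\tfrac12\,\mathrm{tr}\big(\betar\betar^{\top}(\bar\pr)X-\betar\betar^{\top}(\bar\xi)Y\big)$ is bounded by $C\eps^{-1}|\bar\pr-\bar\xi|^2$ using the Lipschitz estimate on $\betar$ together with the matrix inequality, and therefore vanishes in the limit. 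The genuinely new difficulty---and the step I expect to be the main obstacle---is the nonlocal difference $\mathcal{B}u_1^\eta(\bar\pr)-\mathcal{B}u_2(\bar\xi)$, which is delicate because the jump destination $\pr+\gammar(\pr,u)$ itself depends on the base point. Here I would split the integral over $\mathcal{U}$ into small and large jumps: on the small-jump part a second-order Taylor expansion reduces the contribution to a term controlled by $X,Y$ and by $\int_{\mathcal{U}}\rho^2\,d\nu<\infty$, while on the large-jump part the bound $|\gammar(\bar\pr,u)-\gammar(\bar\xi,u)|\le\rho(u)|\bar\pr-\bar\xi|$, combined with the global maximality of $u_1^\eta-u_2$ (which gives $\mathcal{B}u_1^\eta\le\mathcal{B}\psi_1$ and $\mathcal{B}u_2\ge\mathcal{B}\psi_2$ for the penalizing test functions $\psi_1,\psi_2$), yields a bound of order $|\bar\pr-\bar\xi|$. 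Collecting all estimates and letting first $\eps\to0$ and then $\eta\to0$, the $\eta$-margin survives while all error terms disappear, forcing $m\le0$ and contradicting $m>0$. This contradiction yields $u_1\le u_2$ on all of $[0,T]\times\simplexr$.
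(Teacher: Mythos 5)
Your opening remark identifies exactly what the paper does: it gives \emph{no} proof of this theorem at all, quoting it verbatim as Theorem~4.1 of Pham (1998), so that the paper's only real work here is the verification that the reformulated state equation fits Pham's framework (this is precisely Assumption~\ref{coef-model}, established for the jump coefficient in Lemma~\ref{gamma_lipschitz}). Your doubling-of-variables sketch is therefore a genuinely different route from the paper -- it reconstructs the argument that the paper outsources -- and its skeleton is the correct one: strict-subsolution perturbation, penalization, a nonlocal Crandall--Ishii lemma giving $X\le Y$, the trace estimate for $\betar\betar^{\top}(\bar\pr)X-\betar\betar^{\top}(\bar\xi)Y$ from Lipschitz continuity of $\betar$, and the replacement of test functions by solutions in the nonlocal term via global maximality (your parenthetical $\mathcal{B}u_1^\eta\le\mathcal{B}\psi_1$, $\mathcal{B}u_2\ge\mathcal{B}\psi_2$ is indeed the key device, since Definition~\ref{vis_defi} evaluates $\mathcal{B}$ on the test function rather than on the solution). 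Two points of friction with the present setting are worth recording. First, the step you single out as the main obstacle -- splitting $\mathcal{U}$ into small and large jumps and Taylor-expanding against $\int_{\mathcal{U}}\rho^2(u)\nu(du)$ -- is superfluous here: the compensator of $N$ is $\lambda\,du$ on $\mathcal{U}=[0,1]^\kappa$, a \emph{finite} measure, so $\mathcal{B}$ is a bounded zero-order operator on bounded continuous functions, and the entire nonlocal difference can be handled the way you treat the large-jump part alone, using \eqref{lipsch:Delta} and global maximality; the small-jump machinery belongs to the infinite-activity L\'evy case and is not needed for expert opinions arriving at Poisson times. Second, since the equation is posed on the compact set $[0,T]\times\simplexr$ with no lateral boundary condition, one should note explicitly why doubled maximizers falling on the relative boundary of $\simplexr$ cause no difficulty: the viscosity inequalities of Definition~\ref{vis_defi} are required at \emph{every} point of the simplex (the extremum in the definition is global over the simplex), so the argument applies wherever the maximizer lands. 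With those two adjustments your outline is a faithful, if condensed, version of the quoted proof; what it buys over the paper is self-containedness, while the paper's citation buys brevity at the cost of making Assumption~\ref{coef-model} and its match with Pham's hypotheses carry the entire logical weight.
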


Together with Proposition~\ref{viscosity}, this result implies immediately that the value function $V(t,\pr)$ associated to the
optimization problem (\ref{eq:objective-1}) is the unique continuous viscosity solution
of $(\ref{bell-equa0})$.

\section{Regularized Dynamic Programming Equation}
\label{regularization} In this section we introduce the regularized
version of our  dynamic programming problem and we discuss the
convergence of reward and value function as the regularization-terms
converge to zero. In Corollary \ref{eps_opt} we finally show that
optimal strategies in the regularized problem are nearly optimal in
the original problem.

\paragraph{Regularized state equation}
Since regularization will drive the state process outside the set
$\simplexr$ we need to extend  the definition of the coefficients
$\restr{\alpha}, \restr{\beta}$ and $\restr{\gamma}$ from
$\simplexr$ to the whole $\R^{d-1}$. For $\pi\in \R^{d-1}$, $h\in
\compactset$ and $\eps>0$ we define
\[\widetilde{\restr{\alpha}}(\pr,h) := \left\{
\begin{array}{cl}
\restr{\alpha}(\pr,h) (1-\dist(\pi, \simplexr) /\eps)  &  ~~\text{for}\quad \pr \in \simplexr_\eps\\
0 & ~~\text{otherwise.}
\end{array}
\right. \] Note, that $\simplexr \subset \simplexr_\eps$ and there
is a continuous transition to zero if $\dist(\pi, \simplexr)$
reaches $\eps$. Moreover, on $\simplexr$ it holds
$\widetilde{\restr{\alpha}}(\pr,h) =\restr{\alpha}(\pr,h)$, i.e.~the
coefficients coincide. Analogously we define
$\widetilde{\restr{\beta}}$ and $\widetilde{\restr{\gamma}}$ as
extensions of ${\restr{\beta}}$ and ${\restr{\gamma}}$.
\begin{lemma}
\label{lipschitz_growth_restr} Under the assumptions of Lemma
\ref{coef-model} the coefficients $\widetilde{\restr{\alpha}},
\widetilde{\restr{\beta}}$ and $\widetilde{\restr{\gamma}}$  satisfy
the Lipschitz and growth conditions \eqref{lipsch:b,sigma} to
\eqref{growth-delta} for $\pr \in \R^{d-1}$.
\end{lemma}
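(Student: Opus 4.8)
The plan is to realize each extended coefficient as a product of the original coefficient with a scalar cutoff factor, and then to exploit that this factor is globally Lipschitz and vanishes outside the compact set $\simplexr_\eps$. Concretely, set
\[
\phi(\pr) := \max\{0,\, 1-\dist(\pr,\simplexr)/\eps\},
\]
so that $\widetilde{\restr{\alpha}}(\pr,h)=\restr{\alpha}(\pr,h)\,\phi(\pr)$ on $\simplexr_\eps$ and $\widetilde{\restr{\alpha}}\equiv 0$ elsewhere, and analogously for $\widetilde{\restr{\beta}}$ and $\widetilde{\restr{\gamma}}$. The function $\phi$ has three properties that drive the whole argument: first, $0\le\phi\le 1$ everywhere; second, $\phi$ is globally $\tfrac{1}{\eps}$-Lipschitz, since $\pr\mapsto\dist(\pr,\simplexr)$ is $1$-Lipschitz and $t\mapsto\max\{0,t\}$ is non-expansive; and third, $\phi(\pr)=0$ whenever $\dist(\pr,\simplexr)\ge\eps$, in particular for every $\pr\notin\simplexr_\eps$.

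The growth conditions are then immediate. Since $\simplexr_\eps$ is compact and $\restr{\alpha}(\cdot,h),\restr{\beta}(\cdot)$ are continuous on it uniformly in $h\in\compactset$ (itself compact), these coefficients are bounded there; together with $0\le\phi\le 1$ and the vanishing of the extensions off $\simplexr_\eps$, this makes $\widetilde{\restr{\alpha}}$ and $\widetilde{\restr{\beta}}$ globally bounded, so \eqref{growth_cond} holds trivially, and $\norm{\widetilde{\restr{\gamma}}(\pr,u)}\le\norm{\restr{\gamma}(\pr,u)}$ combined with Lemma~\ref{gamma_lipschitz} gives \eqref{growth-delta} with the same $\rho$. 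For the Lipschitz estimates I would split $\R^{d-1}\times\R^{d-1}$ into three cases according to the position of $\pr_1,\pr_2$ relative to $\simplexr_\eps$. If both points lie in $\simplexr_\eps$, the product rule for bounded Lipschitz functions applies: writing $A:=\sup_{\pr\in\simplexr_\eps,\,h\in\compactset}\norm{\restr{\alpha}(\pr,h)}$ and combining \eqref{lipsch:b,sigma} on $\simplexr_\eps$ with the $\tfrac1\eps$-Lipschitz property of $\phi$ yields
\[
\norm{\widetilde{\restr{\alpha}}(\pr_1,h)-\widetilde{\restr{\alpha}}(\pr_2,h)}\le \phi(\pr_1)\norm{\restr{\alpha}(\pr_1,h)-\restr{\alpha}(\pr_2,h)}+A\,|\phi(\pr_1)-\phi(\pr_2)|\le\big(C_L+\tfrac{A}{\eps}\big)\norm{\pr_1-\pr_2},
\]
uniformly in $h$, and likewise for $\widetilde{\restr{\beta}}$. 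If both points lie outside $\simplexr_\eps$, both extensions vanish and there is nothing to prove.

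The remaining mixed case, say $\pr_1\in\simplexr_\eps$ and $\pr_2\notin\simplexr_\eps$, is the only non-routine point, because the product decomposition is unavailable: $\restr{\alpha}$ is simply undefined at $\pr_2$. Here the vanishing of the cutoff at the boundary saves the day. Since $\phi(\pr_2)=0$ we have $\widetilde{\restr{\alpha}}(\pr_2,h)=0$ and $\phi(\pr_1)=|\phi(\pr_1)-\phi(\pr_2)|$, whence
\[
\norm{\widetilde{\restr{\alpha}}(\pr_1,h)-\widetilde{\restr{\alpha}}(\pr_2,h)}=\norm{\restr{\alpha}(\pr_1,h)}\,\phi(\pr_1)\le A\,|\phi(\pr_1)-\phi(\pr_2)|\le\tfrac{A}{\eps}\,\norm{\pr_1-\pr_2}.
\]
Taking $C_L':=C_L+A/\eps$ (and the analogous constant for $\widetilde{\restr{\beta}}$) as the new global Lipschitz constant gives \eqref{lipsch:b,sigma} on all of $\R^{d-1}$; since Assumption~\ref{coef-model} only requires the existence of such a constant, this suffices. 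The same three-case scheme yields \eqref{lipsch:Delta} for $\widetilde{\restr{\gamma}}$, the only change being that the boundedness constant $A$ is replaced by the bound $\norm{\restr{\gamma}(\pr,u)}\le\rho(u)\sup_{\pr\in\simplexr_\eps}(1+\norm{\pr})\le C\rho(u)$ from Lemma~\ref{gamma_lipschitz}; the resulting extra factor $C/\eps$ is absorbed into a rescaled $\rho$, which preserves $\int_{\mathcal U}\rho^2(u)\,\nu(du)<\infty$. I expect the mixed case to be the only real obstacle, and its resolution rests entirely on having designed $\phi$ as a globally Lipschitz cutoff that vanishes on $\partial\simplexr_\eps$.
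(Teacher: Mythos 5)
Your proposal is correct and follows essentially the same route as the paper, which defines the extensions by multiplying the original coefficients by the bounded, Lipschitz cutoff $1-\dist(\pi,\simplexr)/\eps$ and asserts that this multiplication preserves the Lipschitz and growth properties. Your write-up simply supplies the details the paper leaves implicit (the product-rule estimate, and the mixed case where one point lies outside $\simplexr_\eps$, handled by the vanishing of the cutoff there), so it is a faithful elaboration rather than a different argument.
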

\begin{proof}
The Lipschitz and growth conditions for the coefficients $\alphar,
\betar$ and $\gammar$  given in Lemma \ref{coef-model} hold for
$\pr\in \simplexr_\eps$ for $\eps\le\epsmax$. Multiplication of
these functions  by the bounded and Lipschitz continuous function
$1-\dist(\pi, \simplexr) /\eps$ preserves the Lipschitz and growth
property.
\end{proof}
For the sake of simplicity of notation in the sequel we will
suppress the tilde and simply write $\restr{\alpha}, \restr{\beta}$
and $\restr{\gamma}$ instead of $\widetilde{\restr{\alpha}},
\widetilde{\restr{\beta}}$ and $\widetilde{\restr{\gamma}}$.

Next we define the dynamics of the regularized state process ${}^\np
\pr_t$
\begin{equation}
\label{filter_pert} d\,{}^\np \pr_t = \restr{\alpha}({}^\np
\pr_t,h_t) dt + \restr{\beta}\tr({}^\np \pr_t) dB_t +
\int\nolimits_{\mathcal{U}} \restr{\gammaN}({}^\np \pr_{t-},u)
\widetilde{N}(dt, du) +\frac{1}{\sqrt{\np}} d\widetilde B_t
\end{equation}
where $\widetilde B_t$ denotes a $d-1$-dimensional Brownian motion
independent of $B_t$.  This state process is now  driven by an
$n+d-1$-dimensional Brownion motion.  Note that the diffusion
coefficient of the regularized equation $( \restr{\beta}\tr(\pr_t),
\frac{1}{\sqrt{\np}}I_{d-1} )\tr$ satisfies the Lipschitz and growth
condition (\ref{lipsch:b,sigma})  and \eqref{growth_cond}
 given in Lemma \ref{coef-model} since
$\restr{\beta}(\pr_t)$ satisfies these conditions  and
$\frac{1}{\sqrt{\np}}I_{d-1}$ does not depend on $p$.

\paragraph{$L_2$-Convergence ${}^\np\pi_t \to \pi_t$}\ We now compare the
solution ${}^\np \pr_t$ of the regularized state equation
(\ref{filter_pert}) with the solution $\pr_t$ of the unregularized
state equation \eqref{filter_N} and study asymptotic properties for
$\np\to \infty$. This will be crucial for establishing  convergence
of the associated reward function of the regularized problem to the
original optimization problem.

We assume that both processes start at time $t_0\in[0,T]$ with the
same initial value $q\in\simplexr$, i.e. ${}^\np
\pr_{t_0}=\pr_{t_0}=q $. The corresponding solutions are denoted by
${}^\np\pr_t^{(t_0,q,h)}$ and $\pr_t^{(t_0,q,h)}$.

\begin{lemma}[Uniform $L_2$-convergence
w.r.t.~$h\in\mathcal{H}$] \label{conv_filter} It holds for
$\np\to\infty$
\[E\Big( \sup_{t_0\le t\le T} \Big|{}^\np\pr_t^{(t_0,q,h)}-\pr_t^{(t_0,q,h)}\Big|^2\Big) \longrightarrow 0
\quad\text{uniformly for   } h\in \mathcal{H}.\]
\end{lemma}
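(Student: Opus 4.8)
The plan is to control the difference process $\Delta_t := {}^\np\pr_t^{(t_0,q,h)}-\pr_t^{(t_0,q,h)}$ by writing the SDE it satisfies and applying a Gronwall-type argument. Subtracting \eqref{filter_N} from \eqref{filter_pert} (both started at the same point $q$ at time $t_0$), the drift, diffusion and jump coefficients all enter through the \emph{differences} $\restr{\alpha}({}^\np\pr_s,h_s)-\restr{\alpha}(\pr_s,h_s)$, $\restr{\beta}({}^\np\pr_s)-\restr{\beta}(\pr_s)$, and $\restr{\gammaN}({}^\np\pr_{s-},u)-\restr{\gammaN}(\pr_{s-},u)$, while the regularization contributes the single extra term $\frac{1}{\sqrt{\np}}(\widetilde B_t - \widetilde B_{t_0})$. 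The key point is that the coefficient differences are all Lipschitz in the state with constants \emph{independent of $h$}, thanks to Lemma~\ref{lipschitz_growth_restr} (in particular \eqref{lipsch:b,sigma} holds with a $\sup_{h\in\compactset}$), which is exactly what will deliver uniformity in $h\in\mathcal{H}$.

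The key steps, in order, are as follows. First I would write $\Delta_t$ as a sum of a drift integral, a Brownian stochastic integral, a compensated-jump integral, and the regularization term $\frac{1}{\sqrt{\np}}(\widetilde B_t - \widetilde B_{t_0})$. Second, I would take $\sup_{t_0\le t\le u}|\Delta_t|^2$ and expectations, and estimate each piece: the drift term via Jensen/Cauchy--Schwarz and the Lipschitz bound \eqref{lipsch:b,sigma}; the two martingale terms via the Burkholder--Davis--Gundy inequality together with \eqref{lipsch:b,sigma} and \eqref{lipsch:Delta} (the jump term using $\int_{\mathcal{U}}\rho^2(u)\nu(du)<\infty$ so that the jump Lipschitz constant is finite); and the regularization term via $E(\sup_{t_0\le t\le u}|\widetilde B_t-\widetilde B_{t_0}|^2)\le C(u-t_0)\le CT$, giving a contribution of order $\frac{1}{\np}$. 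Collecting these yields, for $\phi(u):=E\big(\sup_{t_0\le t\le u}|\Delta_t|^2\big)$, an inequality of the form $\phi(u)\le \frac{C}{\np}+C\int_{t_0}^{u}\phi(s)\,ds$ with $C$ independent of $h$. Third, Gronwall's lemma then gives $\phi(T)\le \frac{C}{\np}e^{CT}\to 0$ as $\np\to\infty$, uniformly in $h$.

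Two technical points deserve care. To justify the BDG step I would first argue finiteness of $\phi$ (or work with a localizing sequence of stopping times and pass to the limit by monotone convergence), using the linear-growth bounds \eqref{growth_cond} and \eqref{growth-delta} to ensure all the integrals are well defined and the moments are finite; the estimates of Proposition~\ref{estimates} guarantee the requisite moment bounds on $\pr_t$ and, after an entirely analogous argument, on ${}^\np\pr_t$. I would also note that the extended coefficients of Lemma~\ref{lipschitz_growth_restr} are Lipschitz and of linear growth on all of $\R^{d-1}$, so no issue arises from the regularized process leaving $\simplexr$.

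The main obstacle is obtaining the constant $C$ uniformly in $h\in\mathcal{H}$. This hinges entirely on using the $h$-uniform Lipschitz estimate \eqref{lipsch:b,sigma}, in which the drift difference is bounded by $\sup_{h\in\compactset}\|\restr{\alpha}(\pr_1,h)-\restr{\alpha}(\pr_2,h)\|\le C_L\|\pr_1-\pr_2\|$; since $\betar$ and $\gammar$ do not depend on $h$ at all, and since the regularization term is likewise $h$-independent, every constant appearing in the Gronwall estimate can be taken free of $h$. Once this is recognized, the uniformity is automatic and the remaining work is the routine moment bookkeeping sketched above.
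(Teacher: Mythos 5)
Your proposal is correct and follows essentially the same route as the paper's proof: the same decomposition of the difference process into a drift integral plus martingale terms plus the $\frac{1}{\sqrt{\np}}\widetilde B$ contribution, the same use of the $h$-uniform Lipschitz constants from Assumption~\ref{coef-model} (extended via Lemma~\ref{lipschitz_growth_restr}), and the same Gronwall argument yielding a bound of order $\frac{1}{\np}e^{CT}$ uniformly in $h$. The only cosmetic differences are that the paper controls the martingale suprema with Doob's $L^2$ inequality and the It\^o--L\'evy isometry (absorbing the regularization term into the martingale part) rather than BDG and a separate Brownian estimate, which changes nothing of substance.
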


\begin{proof}
To simplify the notation we suppress the superscript $(t_0,q,h)$ and
write $\pi_t$ and ${}^\np\pi_t$. Moreover, we denote by $C$ a
generic constant.

We give the proof for $t_0=0$, only. Using the corresponding
representation as stochastic integrals for the solutions of the
above SDEs we find
\begin{eqnarray}
\nonumber
{}^\np\pi_t - \pi_t &=& A_t^\np + M_t^\np \qquad  \text{where}\\
\nonumber\quad A_t^\np& :=& \int\nolimits_0^t
(\restr{\alpha}({}^\np\pr_s,h_s) - \restr{\alpha}(\pi_s,h_s))
ds\quad \text{and}
\\
\nonumber M_t^\np & = & \int\nolimits_0^t
(\restr{\beta}({}^\np\pr_s)-\restr{\beta}(\pr_s))\tr  dB_s +
\int\nolimits_0^t \int\nolimits_{\mathcal{U}}
(\restr{\gamma}({}^\np\pr_s,u)-\restr{\gamma}(\pr_s,u))\widetilde{N}(ds, du)\\
\nonumber &&+ \frac{1}{\sqrt{\np}} \,d\widetilde B_t.
\end{eqnarray}
Note that here we have used the fact that the SDE for ${}^\np\pr$ and for $\pr$ is driven by an exogenous Poisson random measure, since this permits us to write the difference of the jump-terms as stochastic integral with respect to the same compensated random measure.

Denoting $G_t^\np:=E\Big( \sup_{s\le t}|{}^\np\pr_s-\pr_s|^2\Big)$
it holds
\begin{equation}
\label{Gt_def} G_t^\np = E\Big( \sup_{s\le t}|A_s^\np + M_s^\np
|^2\Big) \le 2 E\Big( \sup_{s\le t}\norm{A_s^\np }^2\Big) + 2 E\Big(
\sup_{s\le t}\norm{M_s^\np }^2\Big).
\end{equation}
For the first term on the r.h.s.~we find by applying Cauchy-Schwarz
inequality and the Lipschitz condition (\ref{lipsch:b,sigma}) for
$\alpha$
\begin{eqnarray*}
\sup_{s\le t}\norm{A_s^\np }^2& = & \sup_{s\le t} \norm{
\int\nolimits_0^s
\big(\restr{\alpha}({}^\np\pr_u,h_u) - \restr{\alpha}(\pi_u,h_u) \big) du}^2\\
&\le& \sup_{s\le t} \; s\cdot \int\nolimits_0^s
\Big| \restr{\alpha}({}^\np\pr_u,h_u) - \restr{\alpha}(\pi_u,h_u)) \Big|^2 du\\
& \le & t \cdot \int\nolimits_0^t C_L |{}^\np\pr_u - \pr_u|^2 du \le
t \cdot \int\nolimits_0^t C_L \sup_{v\le u} |{}^\np\pr_v - \pr_v|^2
du.
\end{eqnarray*}
Note that the constant $C_L$ does not depend on $h$.
Taking expectation it follows
\begin{eqnarray}
\label{A_esti} E\Big( \sup_{s\le t}\norm{A_s^\np }^2\Big) &\le &
t\cdot C_L \int_0^t E\Big(\sup_{v\le s} |{}^\np\pr_v - \pr_v|^2
ds\Big)\le  C\int_0^t G_s^\np ds.
\end{eqnarray}
For the second term on the r.h.s.~of (\ref{Gt_def}) Doob's
inequality for martingales  yields
\begin{eqnarray}
\nonumber
 E\Big( \sup_{s\le t}\norm{M_s^\np }^2\Big) &\le & 4
E(\norm{M_t^\np}^2) \\
\label{M_esti}&=& 4  \Big( \int\nolimits_0^t
E\big(\operatorname{tr}[(\restr{\beta}({}^\np\pr_s)-\restr{\beta}(\pr_s))\tr
(\restr{\beta}({}^\np\pr_s)-\restr{\beta}(\pr_s))] \big)ds \\
\nonumber
 && +   \int\nolimits_0^t \int\nolimits_{\mathcal{U}}
E\big(\norm{\restr{\gamma}({}^\np\pr_s,u)-\restr{\gamma}(\pr_s,u))}^2\big)
\nu(du) ds + \frac{(d-1)t}{\np}\Big).
\end{eqnarray}
Using the Lipschitz conditions  (\ref{lipsch:b,sigma}) and
(\ref{lipsch:Delta}) for the coefficients $\beta$ and $\gamma$ it
follows
\begin{eqnarray*}
\lefteqn{E\big(tr[(\restr{\beta}({}^\np\pr_s)-\restr{\beta}(\pr_s))\tr
(\restr{\beta}({}^\np\pr_s)-\restr{\beta}(\pr_s))]\big)
\le  C_L^2 E(\norm{{}^\np\pr_s - \pr_s}^2)}\hspace*{50mm}\\
 &\le &C_L^2 E\big(\sup_{v\le s} \norm{{}^\np\pr_v - \pr_v}^2\big)  = C_L^2 G_s^\np\\
 E\big(\norm{\restr{\gamma}({}^\np\pr_s,u)-\restr{\gamma}(\pr_s,u))}^2\big) &\le &
 \rho^2(u)E(\norm{{}^\np\pr_s - \pr_s}^2)\\
 &\le &\rho^2(u) E\big(\sup_{v\le s} \norm{{}^\np\pr_v - \pr_v}^2\big) = \rho^2(u) G_s^\np.
\end{eqnarray*}
Substituting the above estimates into (\ref{M_esti}) it follows that
\begin{eqnarray}
\nonumber E\Big( \sup_{s\le t}\norm{M_s^\np }^2\Big) &\le & 4  \Big(
\int\nolimits_0^t C_L^2 G_s^\np ds +
 \int\nolimits_0^t G_s^\np  ds\int\nolimits_{\mathcal{U}} \rho^2(u)
 \nu(du)
+ \frac{(d-1)t}{\np}\Big)\\
\label{M_esti2} &\le & C \int\nolimits_0^t  G_s^\np ds +
\frac{4(d-1)t}{\np}.
\end{eqnarray}
Substituting (\ref{A_esti}) and (\ref{M_esti2}) into (\ref{Gt_def})
we find
\[G_t^\np \le \frac{4(d-1)T}{m} + C \int\nolimits_0^t G_s^\np ds .\]
Finally we apply Gronwall Lemma to derive
\[G_T^\np \le \frac{4(d-1)T}{m}\; e^{CT} \to 0 \quad\text{for } \np\to \infty\]
which concludes the proof.

\end{proof}

Note, that the $L_2$-convergence  for the restricted state process
${}^\np \pr_t$ established in Lemma \ref{conv_filter} also holds for
the associated $d$-dimensional process ${}^\np p_t=R{}^\np \pr_t$.

We now extend the notions of reward and value function given in
\eqref{reward_value_function} to the  process ${}^\np p_t=R\,{}^\np
\pr_t$ with ${}^\np \pr_t$ satisfying the regularized state equation
\eqref{filter_pert}. Since ${}^\np p_t$ takes values in $\R^{d}$ (and
not only in $\simplex$) we extend the function $b$ given in
\eqref{eq:def-b} to $p=R\pr\in \R^d$. With the notation $b_*=
\min\{b(p,h),~p\in\simplex, h\in \compactset\}$ \text{and}
  $b^*= \max\{b(p,h),~p\in\simplex, h\in \compactset\}$
we define  $$\widetilde{b}(p,h) :=(b(p,h) \vee b_*) \wedge b^*.$$
Then $\widetilde{b}$ is  bounded on $\R^d\times \compactset$ and for
$p\in \simplex$ the  function $\widetilde{b}$ coincides with $b$.
In the sequel we simply write $b$
instead of $\widetilde{b}$.
We define the reward and value function associated to the
regularized state equation \eqref{filter_pert} by
\begin{eqnarray}
\label{reward_m} \nonumber v^m(t,\pr,h) &=& E \Big(
\exp\Big\{\int_t^T -b ( R({}^\np\pr_s^{(t,\pr,h)}),h_s)
ds\Big\}\Big)
\quad \text{for } h\in \mathcal{H},\\
\nonumber V^m(t,\pr) & = &\sup \{ v^m(t,\pr,h)\colon h \in
\mathcal{H}\}.
\end{eqnarray}
Recall that $v(t,\pr,h) $ and
$V(t,\pr)$ defined in \eqref{value_restr} denote the  reward and value function associated to the unregularized state
equation \eqref{filter_N}. The generator associated to the solution of the
regularized state equation \eqref{filter_pert} reads as
\begin{eqnarray*}\label{generator_pert}
{}^\np\mathcal{L}^{h} g(\pr)&=&
\frac{1}{2}\sum_{i,j=1}^{d-1}\restr{\beta}_i^{\top}(\pr)\restr{\beta}_j(\pr)g_{\pr^i\pr^j}
+ \frac{1}{2\np} \sum_{i=1}^{d-1} g_{\pr^i\pr^i}
+ \sum_{i=1}^{d-1} \restr{\alpha}^i(\pr,h) g_{\pr^i}\\
&& \hspace*{15mm}
+\int_{\mathcal{U}}\{g(\pr+\restr{\gamma}(\pr,u))-g(\pr)\}\nu(du)
\end{eqnarray*}
and the associated dynamic programming equation is
\begin{equation}\label{eq:HJB-regularized}
V^m_t(t,\pr)+\sup_{h \in \compactset} \Big\{{}^\np\mathcal{L}^{h}
V^m(t,\pr)-b(R\pr,h;\theta)V^m(t,\pr)\Big\}=0, ~(t,\pr) \in [0,T)
\times \R^{d-1}.
\end{equation}
Note, that for the  generator ${}^\np\mathcal{L}^{h}$  the
\textit{ellipticity condition}  for the coefficients of the second
derivatives holds: we have for all $z\in \R^{d-1}\setminus\{0\}$
\begin{eqnarray*}
z\tr(\restr{\beta}\tr\restr{\beta} + \frac{1}{2m} I_{d-1}) z &=&
z\tr \restr{\beta}\tr\restr{\beta} z + \frac{1}{2m}z\tr z =
~|\restr{\beta} z|^2 ~+\frac{1}{2m} |z|^2 > 0.
\end{eqnarray*}
Hence the results of
Davis \&  Lleo \cite{Davis and Lleo (2012)} apply to this dynamic programming problem. equation. According to Theorem~3.8 of their paper,  there is a classical solution $V^\np $ of \eqref{eq:HJB-regularized}. Moreover, for every $(t, \pi)$ there is a unique maximizer ${}^\np h^*$  of the problem
$$\sup_{h \in \compactset} \Big\{{}^\np\mathcal{L}^{h}
V^m(t,\pr)-b(R\pr,h;\theta)V^m(t,\pr)\Big\},$$
${}^\np h^*$ can be chosen as a Borel-measurable function of $t$ and $\pi$ and the optimal strategy is given by ${}^\np h^*_t = {}^\np h^*(t, {}^\np \pr_t) $; see also the discussion
preceding Remark~\ref{rem:strategy}.

\paragraph{Convergence of reward and value function}
The next theorem on the uniform convergence of reward functions is
our main result; convergence of the value function and
$\varepsilon$-optimality of ${}^\np h^*$ follow easily from this
theorem.

\begin{theorem}[Uniform Convergence of reward functions]
\label{conv_reward}  It holds
\[
\sup\limits_{h\in\mathcal{H}} |v^\np(t,\pr,h) - v(t,\pr,h)|\to 0
\quad\text{for } \np\to\infty, \quad t\in[0,T],~\pr\in \simplexr.\]
\end{theorem}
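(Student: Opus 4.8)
The plan is to reduce everything to the uniform $L_2$-convergence of the state processes established in Lemma~\ref{conv_filter}. Writing the two exponents with the shorthand $A^\np := \int_t^T -b(R\,{}^\np\pr_s^{(t,\pr,h)},h_s)\,ds$ and $A := \int_t^T -b(R\pr_s^{(t,\pr,h)},h_s)\,ds$, I would start from the elementary bound
\[
|v^\np(t,\pr,h) - v(t,\pr,h)| \le E\big(\,|e^{A^\np} - e^{A}|\,\big).
\]
The first observation is that $b$ — in its truncated form $\widetilde b$, which is precisely why that truncation was introduced, since the regularized process may leave $\simplexr$ — is bounded on $\R^d\times\compactset$. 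Hence both $A^\np$ and $A$ lie in a fixed bounded interval that is independent of $\np$, of $h$ and of $\omega$, so that $\exp$ is Lipschitz on that interval and $|e^{A^\np}-e^A|\le C\,|A^\np - A|$ with a constant $C$ not depending on $h$.

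Next I would bound the exponent difference pathwise. Since $\pr\mapsto b(R\pr,h)$ is affine (its $\pr$-gradient is $-\theta(MR)\tr h$, which is bounded uniformly over the compact set $\compactset$) and since truncation is non-expansive, this map is Lipschitz in $\pr$ uniformly in $h\in\compactset$. Therefore
\[
|A^\np - A| \le \int_t^T |b(R\pr_s,h_s)-b(R\,{}^\np\pr_s,h_s)|\,ds \le C\,(T-t)\sup_{t\le s\le T}|{}^\np\pr_s - \pr_s|,
\]
again with $C$ independent of $h$. Combining the two estimates and taking expectations yields
\[
|v^\np(t,\pr,h) - v(t,\pr,h)| \le C\,E\Big(\sup_{t\le s\le T}|{}^\np\pr_s^{(t,\pr,h)} - \pr_s^{(t,\pr,h)}|\Big).
\]

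Finally I would pass from the $L_1$- to the $L_2$-norm by Jensen's (or the Cauchy--Schwarz) inequality and invoke Lemma~\ref{conv_filter}: taking the supremum over $h\in\mathcal{H}$,
\[
\sup_{h\in\mathcal{H}}|v^\np(t,\pr,h) - v(t,\pr,h)| \le C\Big(\sup_{h\in\mathcal{H}} E\big(\sup_{t\le s\le T}|{}^\np\pr_s^{(t,\pr,h)} - \pr_s^{(t,\pr,h)}|^2\big)\Big)^{1/2}\longrightarrow 0
\]
as $\np\to\infty$, since Lemma~\ref{conv_filter} delivers exactly this uniform-in-$h$ convergence. Once that lemma is in hand the argument is essentially routine; the only point requiring genuine care is that every constant along the way — the Lipschitz bound for $\exp$ (from boundedness of $\widetilde b$), the Lipschitz constant of $b$ in $\pr$, and the truncation — must be verified to be independent of $h$, so that the supremum over the entire strategy class $\mathcal{H}$ can be pulled inside and controlled by the single uniform bound of Lemma~\ref{conv_filter}. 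The case $\theta<0$ is handled identically.
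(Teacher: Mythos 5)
Your proof is correct and follows essentially the same route as the paper's: both reduce the claim to Lemma~\ref{conv_filter} via the Lipschitz bound for $e^x$ on a bounded interval (valid because the truncated $b$ is bounded) and the uniform-in-$h$ Lipschitz continuity of $b$ in $\pr$, followed by a Jensen/Cauchy--Schwarz passage from the $L_1$- to the $L_2$-norm. The only cosmetic difference is that you bound the exponent difference pathwise by $\sup_{t\le s\le T}|{}^\np\pr_s-\pr_s|$ before taking expectations, whereas the paper integrates the pointwise-in-time expectations $E\big(|{}^\np\pr_s^{(t,\pr,h)}-\pr_s^{(t,\pr,h)}|\big)$ over $[t,T]$; both are controlled by exactly the uniform-in-$h$ quantity provided by Lemma~\ref{conv_filter}.
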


\begin{proof}
We introduce the notation
\[J:= \int_t^T -b (R\pr_s^{(t,\pr,h)},h_s) ds \quad\text{and}\quad
J^\np:= \int_t^T -b (R\,{}^\np \pr_s^{(t,\pr,h)},h_s) ds.\] Then the
reward functions read as $v(t,\pr,h)=E ( e^{J})$ and
$v^\np(t,\pr,h)=E ( e^{J^\np})$ and it holds
\begin{eqnarray}
\nonumber  |v^\np(t,\pr,h) - v(t,\pr,h)| &=& |E ( e^{J^\np} -e^{J})
| \le E (
|e^{J^\np} -e^{J}|) \\
\label{cc1} &\le &   C E(|J^\np-J| ),
\end{eqnarray}
where we used Lipschitz continuity of $f(x)=e^x$ on bounded
intervals and the boundedness of $J$ and $J^\np$ which follows from
the the boundedness of $b$.
Using Lipschitz continuity of $b$ we derive
\begin{eqnarray}
\nonumber
 E(|J^\np -J|) &= & E\Big(\Big|\int_t^T\big[ b (R\pr_s^{(t,\pr,h)},h_s)-b (R\,{}^\np \pr_s^{(t,\pr,h)},h_s)\big] ds \Big|\Big)\\
\nonumber
 &\le &  \int_t^T C\, E(|{}^\np \pr_s^{(t,\pr,h)}- \pr_s^{(t,\pr,h)}| )\, ds\\
&\le &  C \int_t^T \Big(E(|{}^\np \pr_s^{(t,\pr,h)}-
\pr_s^{(t,\pr,h)}|^2\Big)^{1/2} \, ds \to 0 \label{cc3}
\end{eqnarray}
for $\np\to\infty$ and uniformly w.r.t.~$h\in\mathcal{H}$ which
follows from Lemma \ref{conv_filter}. Plugging (\ref{cc3}) into
(\ref{cc1}) we find
\[\sup\limits_{h\in\mathcal{H}} |v^\np(t,\pr,h) - v(t,\pr,h)|\to 0
\quad\text{for } \np\to\infty.\]

\end{proof}
\begin{corollary}[Convergence of value functions]
\label{conv_value}
 It holds
\[V^\np(t,\pr) \to V(t,\pr)
\quad\text{for } \np\to\infty, \quad t\in[0,T], ~\pr\in \simplexr.\]
\end{corollary}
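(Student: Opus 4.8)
The plan is to deduce the pointwise convergence of the value functions directly from the uniform convergence of the reward functions established in Theorem~\ref{conv_reward}. The key observation is the elementary fact that, for any two bounded real-valued functions defined on a common index set, the absolute difference of their suprema is dominated by the supremum of their absolute difference. Since $V$ and $V^\np$ are defined as suprema of $v$ and $v^\np$ over the same strategy class $\mathcal{H}$, this fact converts the uniform-in-$h$ statement of Theorem~\ref{conv_reward} into the desired pointwise convergence of the value functions.

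First I would fix $t\in[0,T]$ and $\pr\in\simplexr$ and recall that $V(t,\pr)=\sup_{h\in\mathcal{H}} v(t,\pr,h)$ and $V^\np(t,\pr)=\sup_{h\in\mathcal{H}} v^\np(t,\pr,h)$, where both suprema range over the \emph{same} class $\mathcal{H}$ of admissible strategies. For any $h\in\mathcal{H}$ we have
\[
v^\np(t,\pr,h) \le v(t,\pr,h) + \sup_{h'\in\mathcal{H}} |v^\np(t,\pr,h') - v(t,\pr,h')|,
\]
and taking the supremum over $h\in\mathcal{H}$ yields
\[
V^\np(t,\pr) - V(t,\pr) \le \sup_{h\in\mathcal{H}} |v^\np(t,\pr,h) - v(t,\pr,h)|.
\]
By symmetry the same bound holds with the roles of $v^\np$ and $v$ interchanged, so that
\[
|V^\np(t,\pr) - V(t,\pr)| \le \sup_{h\in\mathcal{H}} |v^\np(t,\pr,h) - v(t,\pr,h)|.
\]

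Finally I would invoke Theorem~\ref{conv_reward}, which asserts that the right-hand side tends to zero as $\np\to\infty$; this gives $|V^\np(t,\pr)-V(t,\pr)|\to 0$ and hence the claim. There is no genuine obstacle here: the whole substance of the corollary is already packaged in the uniform-in-$h$ convergence proved in Theorem~\ref{conv_reward}, and the only point requiring (minor) care is that both value functions are suprema over the identical set $\mathcal{H}$, which is precisely what legitimizes the interchange-of-roles argument above. I would note in passing that the bound is in fact uniform in $(t,\pr)$ to the extent that the convergence in Theorem~\ref{conv_reward} is, although the corollary as stated only claims the pointwise conclusion.
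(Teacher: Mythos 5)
Your proof is correct and takes essentially the same route as the paper: both rest on the elementary inequality $\bigl|\sup_{h} a_h - \sup_{h} b_h\bigr| \le \sup_{h} |a_h - b_h|$ over the common index set $\mathcal{H}$, combined with the uniform-in-$h$ convergence of the reward functions from Theorem~\ref{conv_reward}. The only point the paper records that you omit is the case $\theta<0$, where the value functions are infima rather than suprema; your symmetric argument carries over verbatim there via $\inf_h v = -\sup_h(-v)$.
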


\begin{proof}
For $\theta\in(0,1)$ the assertion follows from
\begin{eqnarray*}
|V^\np(t,\pr) - V(t,\pr)| &= & \Big| \sup\limits_{h\in\mathcal{H}}
v^\np(t,\pr,h) - \sup\limits_{h\in\mathcal{H}} v(t,\pr,h) \Big|\\
& \le&
\sup\limits_{h\in\mathcal{H}} |v^\np(t,\pr,h) - v(t,\pr,h)|
\end{eqnarray*}
and Lemma \ref{conv_reward}.
Analogously, for $\theta<0$ it follows
\begin{eqnarray*}
|V^\np(t,\pr) - V(t,\pr)| &= & \Big| \inf\limits_{h\in\mathcal{H}}
v^\np(t,\pr,h) - \inf\limits_{h\in\mathcal{H}} v(t,\pr,h) \Big|
\\&= &
\Big| \sup\limits_{h\in\mathcal{H}} (-v^\np(t,\pr,h)) -
\sup\limits_{h\in\mathcal{H}} (-v(t,\pr,h))
\Big|\\
&\le& \sup\limits_{h\in\mathcal{H}} |v^\np(t,\pr,h) - v(t,\pr,h)|.
\end{eqnarray*}
\end{proof}

\paragraph{On $\varepsilon$-optimal stratgies}
 Finally we show that the optimal strategy ${}^\np h^*$ for the regularized problem is $\varepsilon$-optimal in the original problem. This gives a method for computing (nearly) optimal strategies.
\begin{corollary}[$\varepsilon$-optimality]
\label{eps_opt} For every $\varepsilon>0$ there exists some
$\np_0\in\N$ such that
\[|V(t,\pr)- v(t,\pr,{}^\np h^*)| \le \varepsilon\quad\text{for }\np\ge \np_0,\]
i.e.~${}^\np h^*$ is an $\varepsilon$-optimal strategy for the
original control problem.
\end{corollary}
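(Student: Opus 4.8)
The plan is to establish the bound by a triangle-inequality decomposition that plays the convergence of the value functions off against the \emph{uniform} convergence of the reward functions. The decisive point is that the candidate strategy ${}^\np h^*$ depends on $\np$, so a mere pointwise statement of the form $|v^\np(t,\pr,h)-v(t,\pr,h)|\to 0$ for each fixed $h$ would be of no use; what makes the argument work is precisely the uniformity in $h\in\mathcal{H}$ furnished by Theorem~\ref{conv_reward}.

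First I would recall that, by the verification result of Davis \& Lleo quoted before Remark~\ref{rem:strategy}, ${}^\np h^*$ is an admissible strategy, i.e.\ an element of $\mathcal{H}$, that is optimal for the regularized problem, so that it realizes the supremum defining $V^\np$:
\begin{equation*}
v^\np(t,\pr,{}^\np h^*) = V^\np(t,\pr).
\end{equation*}
Then I would split
\begin{equation*}
|V(t,\pr)-v(t,\pr,{}^\np h^*)| \le |V(t,\pr)-V^\np(t,\pr)| + |V^\np(t,\pr)-v(t,\pr,{}^\np h^*)|.
\end{equation*}
For the first summand I would invoke Corollary~\ref{conv_value}, which gives $|V(t,\pr)-V^\np(t,\pr)|\to 0$ as $\np\to\infty$. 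For the second summand I would use the optimality identity above to rewrite $V^\np(t,\pr)=v^\np(t,\pr,{}^\np h^*)$ and then bound
\begin{equation*}
|V^\np(t,\pr)-v(t,\pr,{}^\np h^*)| = |v^\np(t,\pr,{}^\np h^*)-v(t,\pr,{}^\np h^*)| \le \sup_{h\in\mathcal{H}}|v^\np(t,\pr,h)-v(t,\pr,h)|,
\end{equation*}
which tends to $0$ by Theorem~\ref{conv_reward}. Since ${}^\np h^*\in\mathcal{H}$ it is dominated by the supremum, and this is exactly the step at which the uniformity in $h$ is indispensable.

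Finally, given $\varepsilon>0$ I would pick $\np_0$ so large that, for all $\np\ge\np_0$, each of the two summands is below $\varepsilon/2$; adding them yields $|V(t,\pr)-v(t,\pr,{}^\np h^*)|\le\varepsilon$, which is the claim. The main obstacle here is conceptual rather than computational: one must ensure that the \emph{same} adapted process ${}^\np h^*$ appears in $v^\np(t,\pr,{}^\np h^*)$ and in $v(t,\pr,{}^\np h^*)$, that is, that the feedback control produced in the regularized closed loop is interpreted as a fixed member of $\mathcal{H}$ which is subsequently fed into the \emph{original} (unregularized) state dynamics; the uniform convergence of Theorem~\ref{conv_reward} then disposes of this single strategy automatically. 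No separate treatment of the case $\theta<0$ is required, since both Corollary~\ref{conv_value} and Theorem~\ref{conv_reward} already cover both signs of $\theta$.
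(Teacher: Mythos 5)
Your proposal is correct and is essentially the paper's own proof: the same triangle-inequality decomposition via the optimality identity $v^\np(t,\pr,{}^\np h^*)=V^\np(t,\pr)$, with the first term handled by Corollary~\ref{conv_value} and the second by the uniform convergence of Theorem~\ref{conv_reward}, each bounded by $\varepsilon/2$ for $\np$ large. Your observation that the uniformity in $h$ is the indispensable ingredient (because ${}^\np h^*$ varies with $\np$) is exactly the point the paper itself emphasizes in the remark following the corollary.
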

\begin{proof}
It holds
\begin{eqnarray}
\nonumber\lefteqn{\hspace*{-15mm} |V(t,\pr) -v(t,\pr,{}^\np h^*)|}\\
\nonumber
& \le& |V(t,\pr) -v^\np(t,\pr,{}^\np h^*)| + |v^\np(t,\pr,{}^\np h^*)- v(t,\pr,{}^\np h^*)|  \\
\label{eps1} &= & |V(t,\pr) -V^\np(t,\pr)| + |v^\np(t,\pr,{}^\np
h^*)- v(t,\pr,{}^\np h^*)|
\end{eqnarray}
where for the first term on the r.h.s.~we used $v^\np(t,\pr,{}^\np
h^*) = V^\np(t,\pr)$. Using the convergence properties for the
reward function given in Lemma \ref{conv_reward} and for the value
function given in Corollary \ref{conv_value}  we can find
 for every $\varepsilon>0$  some $\np_0\in \N$ such that for $\np\ge
 \np_0$ it holds
 \[|V(t,\pr) -V^\np(t,\pr)| \le \frac{\varepsilon}{2} \quad\text{and}\quad
 |v^\np(t,\pr,{}^\np h^*)- v(t,\pr,{}^\np h^*)| \le \frac{\varepsilon}{2}.\]
 Plugging  the above estimates into (\ref{eps1}) it follows for $\np\ge \np_0$
 \begin{eqnarray}
\nonumber |V(t,\pr) -v(t,\pr,{}^\np h^*)| &\le&
\frac{\varepsilon}{2} + \frac{\varepsilon}{2} = \varepsilon.
\end{eqnarray}
\end{proof}

\begin{remark}
Note that in the proof of the corollary we use that the sequence of
reward functions $v^\np$ converges to $v$ \textit{uniformly} in $h$.
This is a stronger property than convergence of the value functions
$V^\np$ to $V$ so that standard stability results for dynamic
programming equations are not sufficient to proof the corollary.
\end{remark}

\bibliographystyle{amsplain}

\appendix

\section{Proof of Lemma \ref{gamma_lipschitz}}
\label{proof_gamma_lipschitz}
\begin{proof}
We give the proof for the maximum norm $\normmax{.}$ in $\R^d$. From
this the assertion for the Euclidean norm can be deduced from the
equivalence of norms.

Note that the fact that all densities are $\mathcal{C}^1$ with
compact support $\mathcal{Z} = [a,b]$ implies the existence of
constants $C_2, C_d < \infty $ such that  for all $1 \le k \le d$,
$z \in \mathcal{Z}$,
\begin{equation}
\label{density_bounds} f_k(z) \le  C_2  \text{ and }
 \Big|\frac{\partial}{\partial z_i}
f_j(z)\Big| \le C_d,\quad i=1,\ldots,\kappa.
\end{equation}

\paragraph{Boundedness of $\overline{f}(z,R\pr)$}
First we show that for $\pr\in \simplexr_\eps, \; z\in \mathcal{Z}$
and $\eps< \epsmax$ there are constants $0 <C_*\le  C^*< \infty $
such that
\begin{equation}
\label{fquer_bounds}  C_* \le \overline{f}(z,R\pr) \le C^*.
\end{equation}
For this, observe that for $p = R \pi$,
\begin{equation}
\label{fquer_h1} \overline{f}(z,p) = \sum_{j=1}^d p^j f_j(z) =
\sum_{p^j< 0} p^j f_j(z) + \sum_{p^j\ge 0} p^j f_j(z).
\end{equation}
For the lower bound we deduce
\begin{eqnarray*}
\overline{f}(z,p) &\ge& \sum_{p^j< 0} (-\eps) \max_j f_j(z) + \sum_{p^j\ge 0} p^j \min_j f_j(z)\\
&\ge & -\eps(d-1) C_2 \quad + \quad \Big(1-\sum_{p^j< 0}
p^j\Big)\cdot C_1 \ge  -\eps\frac{C_1}{\epsmax} +  1\cdot C_1 = C_*,
\end{eqnarray*}
where we  used   Assumption \ref{ass_density}, \eqref{density_bounds}, $p^j\ge -\eps$ and
$\sum_{j=1}^d p^j=1$. For the upper bound from \ref{fquer_h1} we
find
\begin{eqnarray*}
\overline{f}(z,p) &\le& 0 + \sum_{p^j\ge 0} p^j \max_j f_j(z)\le
 \Big(1-\sum_{p^j< 0} p^j\Big) C_2 \le  (1+\eps(d-1)) C_2
 =C^*
\end{eqnarray*}
Note that the lower bound in \eqref{fquer_bounds}  implies that
$\overline f(\cdot,R\pr) $ is strictly positive for $\pr\in
\simplexr_\eps$. Moreover, since the components of $p = R\pr$ sum up
to one by definition $\overline f(\cdot,R\pr) $ is a strictly
positive probability density for $\pr\in \simplexr_\eps$. Hence, the
inverse Rosenblatt transform $G(u,R\pr)$ and thus the function
$\gammar(\pr,u)$ defined in \eqref{gammaN_def} is well defined  for
$\pr\in \simplexr_\eps$ (and not just for $\pr\in \simplexr$).

\paragraph{Proof of the Lipschitz condition \eqref{lipsch:Delta}}
Clearly, \eqref{lipsch:Delta} holds for some   constant function
$\rho(u)=\overline \rho$ if we can show that the derivatives of
$\gammar(\pr, u)$ with respect to $\pr^j$ are bounded for all $1 \le
j \le d-1$.    This is obviously  equivalent to estimating the
derivatives of
$$ \gamma^k(p,u) = p^k\Big(\frac{f_k(G(u,p))}{\overline f(G(u,p),p)} -1\Big)$$
with respect to the components~$p^j$ where $p=R\pi$. Let
$$
c^k_j(p,u) := \frac{\partial}{\partial p^j}
\Big(\frac{f_k(G(u,p))}{\overline f(G(u,p),p)} -1\Big),\quad
j,k=1,\ldots,d.
$$
Then it holds
$$\frac{\partial}{\partial p^j} \gammaN^k(p,u) = \delta_{jk}  \Big(\frac{f_k(G(u,p))}{\overline f(G(u,p),p)}
-1\Big) + p^k c^k_j(p,u).$$
The first term on the r.h.s. is bounded since it holds for
$k=1,\ldots,d$ and $\eps<\epsmax$
\begin{equation}
\label{h_esti} \frac{f_k(G(u,p))}{\overline f (G(u,p),p)} \le
\frac{C_2}{C_*},
\end{equation}
where we have used  \eqref{density_bounds} and the lower bound for
$\overline f(z,p)$ given in (\ref{fquer_bounds}).

It remains to show that $c^k_j(p,u)$ is bounded. Abbreviating
$z=z(p)=G(u,p)$ we find
\begin{eqnarray}
\nonumber c^k_j(p,u) &=& \frac{1}{(\overline{f}(z,p))^2}\bigg(
\sum\limits_{l=1}^\kappa
\frac{\partial}{\partial z_l} f_k(z) \frac{\partial}{\partial p^j} G_l(u,p) \cdot \overline{f}(z,p)  \\
&& \label{Lip_h2} \hspace*{15mm} - f_k(z) \cdot
\Big(f_j(z)+\sum_{i=1}^d p^i \sum\limits_{l=1}^\kappa
\frac{\partial}{\partial z_l} f_i(z) \frac{\partial}{\partial p^j}
G_l(u,p) \Big) \bigg).~~~
\end{eqnarray}
Using  \eqref{density_bounds}, $\sum_{j=1}^d |p^j|\le
 1+(d-1)\epsmax  $ and estimate (\ref{fquer_bounds}) for $\overline{f}$,  we derive
for $\eps<\epsmax$
\begin{eqnarray}
\nonumber |c^k_j(p,u)| &\le & \frac{1}{C_*^2} \Big(
C_d \sum\limits_{l=1}^\kappa \Big|\frac{\partial}{\partial p^j} G_l(u,p)\Big| C^*  \\
\label{c_esti1} && + C_2 \cdot \Big(C_2+ (1+(d-1)\epsmax)  C_d
\sum\limits_{l=1}^\kappa \Big|\frac{\partial}{\partial p^j}
G_l(u,p)\Big| \;\Big) \Big).
\end{eqnarray}
In Lemma~\ref{G_deriv_bounded} below we show that the derivatives $\frac{\partial}{\partial
p^j} G_l(u,p)$ are bounded, that is there is some $C>0$ such that for
$j=1,\ldots,d$ and $l=1,\ldots,\kappa$
$
\big|\frac{\partial}{\partial p^j} G_l(u,p) \big| \le C.
$
From this the boundedness of $c^k_j$ follows immediately.

\paragraph{Proof of the growth  condition \eqref{growth-delta}}
Here we apply estimate (\ref{h_esti}) and find
\begin{eqnarray*}
|\gamma^j(p,u)|  = \Big|p^j \Big(\frac{f_j(G(u,p))}{\overline f
(G(u,p),p)} -1\Big)\Big| &\le & |p^j| \Big(\frac{C_2}{C_*} +1\Big)
 \le  (1+\normmax{p})\Big(\frac{C_2}{C_* }
+1\Big)
\end{eqnarray*}
and hence $\normmax{\gamma(p,u)}\leq \overline \rho(1+\normmax{p})$
with some constant $\overline \rho$.
\end{proof}

\begin{lemma}
\label{G_deriv_bounded} Under the assumptions of Lemma
\ref{gamma_lipschitz} there exists a constant $C>0$ such that for
$j=1,\ldots,d$ and $l=1,\ldots,\kappa$
\[ \Big|\frac{\partial}{\partial
p^j} G_l(u,p) \Big| \le C.
\]
\end{lemma}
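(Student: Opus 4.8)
The plan is to obtain the bound by differentiating, with respect to $p^j$, the identity $\widetilde F(G(u,p),p)=u$ that defines $G$ as the inverse Rosenblatt transform, and then to read off the derivatives $\partial G_l/\partial p^j$ from the resulting \emph{lower-triangular} linear system. Componentwise this identity reads, for $k=1,\ldots,\kappa$,
\[
 F_{Z_k|Z_1\ldots Z_{k-1}}\big(G_k(u,p)\,\big|\,G_1(u,p),\ldots,G_{k-1}(u,p),p\big)=u_k,
\]
with the convention $F_{Z_1|\cdot}=F_{Z_1}$. Since $\overline f(z,p)=\sum_j p^jf_j(z)$ is $C^1$ in $z$ and affine (hence smooth) in $p$, the marginal densities \eqref{f_margin}, the conditional densities and the conditional distribution functions are all jointly $C^1$ in $(z,p)$; moreover $\partial_{z_k}F_{Z_k|Z_1\ldots Z_{k-1}}=f_{Z_k|Z_1\ldots Z_{k-1}}>0$, so the implicit function theorem guarantees that each $G_k$ is $C^1$ and justifies the differentiation below. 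Note that each conditional density integrates to one over $z_k\in[a_k,b_k]$ regardless of the normalization of $\overline f$, so $G$ is well defined and differentiable for $p$ in a neighbourhood of $\{R\pr:\pr\in\simplexr_\eps\}$.

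First I would record the bounds needed to control the coefficients of this system. Integrating \eqref{fquer_bounds} over the remaining coordinates of the rectangle $[a,b]$ shows that for $\pr\in\simplexr_\eps$ the marginal densities $f_{Z_1\ldots Z_k}$ are bounded above and below by positive constants; consequently the conditional densities $f_{Z_k|Z_1\ldots Z_{k-1}}$, being quotients of such quantities, satisfy $c_0\le f_{Z_k|Z_1\ldots Z_{k-1}}\le c_1$ for constants $0<c_0\le c_1<\infty$ independent of $z,p$ and $k$. The lower bound $c_0$ is the crucial fact. In the same way, using $|\partial_{z_i}\overline f|\le(1+(d-1)\epsmax)C_d$ and $\partial_{p^j}\overline f=f_j\le C_2$, the $z_i$-derivatives ($i<k$) and the $p^j$-derivative of $f_{Z_k|Z_1\ldots Z_{k-1}}$ are bounded by the quotient rule; integrating over the interval $[a_k,z_k]$ of length at most $b_k-a_k$ then bounds $\partial_{z_i}F_{Z_k|Z_1\ldots Z_{k-1}}$ for $i<k$ and $\partial_{p^j}F_{Z_k|Z_1\ldots Z_{k-1}}$ by a common constant $C'$.

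With these bounds in hand the conclusion follows by induction on $k$. Differentiating the $k$-th identity in $p^j$ and using $\partial_{z_k}F_{Z_k|Z_1\ldots Z_{k-1}}=f_{Z_k|Z_1\ldots Z_{k-1}}$ gives
\[
 f_{Z_k|Z_1\ldots Z_{k-1}}\,\frac{\partial G_k}{\partial p^j}
 =-\sum_{i=1}^{k-1}\frac{\partial F_{Z_k|Z_1\ldots Z_{k-1}}}{\partial z_i}\,\frac{\partial G_i}{\partial p^j}
 -\frac{\partial F_{Z_k|Z_1\ldots Z_{k-1}}}{\partial p^j},
\]
all functions being evaluated at the transform point $z=G(u,p)$. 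For $k=1$ the sum is empty and $|\partial G_1/\partial p^j|\le C'/c_0$. Assuming $|\partial G_i/\partial p^j|\le C$ for $i<k$, the right-hand side is bounded by $(k-1)C'C+C'$ while the diagonal coefficient is at least $c_0$, so $|\partial G_k/\partial p^j|\le\big((k-1)C'C+C'\big)/c_0$; enlarging $C$ if necessary closes the induction and yields the claimed uniform bound for all $l=1,\ldots,\kappa$ and $j=1,\ldots,d$.

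The main obstacle is the correct bookkeeping of the chain rule: because the conditioning arguments in the $k$-th equation are themselves the previously determined functions $G_1,\ldots,G_{k-1}$, the linear system for the derivatives is triangular rather than full, and it is exactly this triangular structure together with the uniform positivity $f_{Z_k|Z_1\ldots Z_{k-1}}\ge c_0$ (inherited from $\overline f\ge C_*$ in \eqref{fquer_bounds}) that lets the recursion close with finite constants.
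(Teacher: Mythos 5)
Your proof is correct, and it rests on exactly the same estimates as the paper's own argument: the two-sided bounds on the marginal densities obtained by integrating \eqref{fquer_bounds} over the remaining coordinates of the rectangle, the resulting uniform positivity $f_{Z_k|Z_1\ldots Z_{k-1}}\ge c_0>0$ of the conditional densities, and the quotient-rule bounds on the $z_i$- and $p^j$-derivatives of the conditional distribution functions. The difference lies in the chain-rule bookkeeping. The paper differentiates the identity $G_l(\widetilde F(z,p),p)=z_l$ with respect to $p^j$, which expresses $\partial_{p^j}G_l$ through the products $\partial_{u_i}G_l\,\partial_{p^j}\widetilde F_i$ and therefore forces a separate step: bounding the entries of the Jacobian $J^G=(J^{\widetilde F})^{-1}$, which the paper carries out by exploiting the lower-triangularity of $J^{\widetilde F}$ and inverting it recursively by Gaussian elimination. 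You instead differentiate the reversed identity $\widetilde F_k(G(u,p),p)=u_k$, which yields a lower-triangular linear system \emph{directly} in the unknown derivatives $\partial_{p^j}G_k$, solvable by forward substitution (your induction on $k$). This eliminates the intermediate object $\partial_{u_i}G_l$ altogether, making the argument one step shorter, and costs nothing: the diagonal coefficients of your system are precisely the conditional densities whose uniform positivity the paper also needs to invert $J^{\widetilde F}$. Your explicit appeal to the implicit function theorem (using $\partial_{z_k}\widetilde F_k=f_{Z_k|Z_1\ldots Z_{k-1}}>0$) to justify that $G$ is $C^1$ before differentiating is also slightly more careful than the paper, which differentiates the inverse without comment.
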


\begin{proof}
We derive from differentiating $G_l(\widetilde F(z,p),p)=z_l$
w.r.t.~$p_j$ using the chain rule
\[
 \sum\limits_{i=1}^\kappa \frac{\partial}{\partial u_i}
G_l(\widetilde F(z,p),p) \frac{\partial}{\partial p^j} \widetilde
F_i(z,p) + \frac{\partial}{\partial p^j} G_l(\widetilde F(z,p),p)=
0.\] Substituting $u= \widetilde F(z,p)$ we obtain the estimate
\begin{equation}
\label{G_deriv} \Big| \frac{\partial}{\partial p^j} G_l(u,p)\Big|
\le \sum\limits_{i=1}^\kappa \Big| \frac{\partial}{\partial u_i}
G_l(u,p)\Big| \Big|\frac{\partial}{\partial p^j} \widetilde
F_i(z,p)\Big|.
\end{equation}
(i) For the proof of the boundedness of the derivatives on the
r.h.s.~we need the following auxiliary estimates for the marginal
densities $f_{Z_1\ldots Z_{k}}, k=1,\ldots,\kappa$ given in
\eqref{f_margin}. From estimate \eqref{fquer_bounds} for
$\overline{f}$ we derive the estimate
\begin{eqnarray}
\label{f_margin_esti1}  C_* \prod\limits_{i=k+1}^\kappa (b_i-a_i)
&\le&  f_{Z_1\ldots Z_{k}}(z_1,\ldots,z_{k},p)~~ \le ~~C^*
\prod\limits_{i=k+1}^\kappa (b_i-a_i).
\end{eqnarray}
For the derivatives of the marginal densities w.r.t.~$p^j$ the
definition of $\overline{f}$ in \eqref{beta_gamma_def} yields
\begin{eqnarray*}
\frac{\partial}{\partial p^j} f_{Z_1\ldots
Z_{k}}(z_1,\ldots,z_{k},p) &=&
\int_{a_{k+1}}^{b_{k+1}}\hspace*{-1em}\ldots
\!\int_{a_\kappa}^{b_\kappa} f_j(z_1,\ldots,z_{k},s_{k+1},\ldots,
s_\kappa) ds_{k+1}\ldots ds_\kappa.
\end{eqnarray*}
From Assumption \ref{ass_density} and \eqref{density_bounds} it
follows
\begin{eqnarray}
\label{f_margin_esti2} 0< C_1 \prod\limits_{i=k+1}^\kappa (b_i-a_i)
\le \frac{\partial}{\partial p^j} f_{Z_1\ldots
Z_{k}}(z_1,\ldots,z_{k},p) \le C_2 \prod\limits_{i=k+1}^\kappa
(b_i-a_i).
\end{eqnarray}
For the derivatives of the marginal densities
w.r.t.~$z_j,~j=1\ldots,k$ we find
\begin{eqnarray}
\nonumber \lefteqn{
\Big| \frac{\partial}{\partial z_j} f_{Z_1\ldots Z_{k}}(z_1,\ldots,z_{k},p) \Big|  }\\
\nonumber &\le& \int_{a_{k+1}}^{b_{k+1}}\hspace*{-1em}\ldots
\!\int_{a_\kappa}^{b_\kappa} \sum\limits_{l=1}^d p^l
\Big|\frac{\partial}{\partial z_j}
f_l(z_1,\ldots,z_{k},s_{k+1},\ldots, s_\kappa)\Big|
ds_{k+1}\ldots ds_\kappa \\
\label{f_margin_esti3} &\le&   C_d \prod\limits_{i=k+1}^\kappa
(b_i-a_i),
\end{eqnarray}
where  the upper bound from \eqref{density_bounds} on the
derivatives of the densities $f_j$ has been used.
\\[1ex]
(ii) Now we can prove the boundedness  for the second term r.h.s.~of
\eqref{G_deriv}. For $k=2,\ldots,\kappa$ we obtain from the
definition of $\widetilde F(z,p)$ in \eqref{Fw_def}
\begin{eqnarray*}
 \Big|\frac{\partial}{\partial p^j} \widetilde
F_i(z,p)\Big| &=& \Big| \int_{a_k}^{z_k} \frac{\partial}{\partial
p^j} f_{Z_k|Z_1\ldots Z_{k-1}}(s_k|z_1,\ldots z_{k-1},p)
ds_k \Big|\\
&=& \Big| \int_{a_k}^{z_k} \frac{\partial}{\partial p^j}
\frac{f_{Z_1\ldots Z_k}(z_1,\ldots,z_{k-1},s_k,p) }{
      f_{Z_1\ldots Z_{k-1}}(z_1,\ldots,z_{k-1},p) }
ds_k \Big|\\
&\le & \int_{a_k}^{z_k} \frac{1}{f^2_{Z_1\ldots Z_{k-1}}(\cdot)}
\Big(~ \Big|\frac{\partial}{\partial p^j} f_{Z_1\ldots
Z_k}(\cdot)\Big| f_{Z_1\ldots Z_{k-1}}(\cdot)+
\\
&& \hspace*{30mm}   f_{Z_1\ldots Z_k}(.)
\Big|\frac{\partial}{\partial p^j} f_{Z_1\ldots Z_{k-1}}(.)
\Big|~\Big) ds_k \le  C.
\end{eqnarray*}
Here, we have used estimate \eqref{f_margin_esti1}, which states
that
 the marginal densities are bounded from above and bounded away
from zero, and \eqref{f_margin_esti2} for the boundedness of the
derivatives of the marginal densities w.r.t.~$p^j$.

For $k=1$ we observe that
\[
\frac{\partial}{\partial p^j} \widetilde F_1(z,p) =
\frac{\partial}{\partial p^j}  F_{Z_1}(z_1,p) = \int_{a_1}^{z_1}
\frac{\partial}{\partial p^j}  f_{Z_1}(s_1,p) ds_1 .
 \]
The boundedness  ${\partial}/{\partial p^j} \widetilde F_1(z,p)$ is
a consequence of  estimate \eqref{f_margin_esti2}.
\\[1ex]
(iii) For proving the boundedness of ${\partial}/{\partial u_i}
G_l(u,p)$ in \eqref{G_deriv} we consider the Jacobian matrices for
$G(u)$ and $\widetilde F(z)$ defined by
\[
J^G(u) := \Big( \frac{\partial}{\partial u_j} G_i(u,p)
\Big)_{i,j=1\ldots,\kappa} \quad \text{and} \quad J^{\widetilde
F}(z) := \Big( \frac{\partial}{\partial z_j} \widetilde F_i(z,p)
\Big)_{i,j=1\ldots,\kappa}.
\]
Below we show that
 for $z=G(u,p)$ the matrix  $J^{\widetilde F}(z)$ is
regular, hence   $J^G(u)=J^{-1}_{\widetilde F}(G(u,p))$, since
$G(\widetilde{F}(z,p),p)=z$. From the definition of $\widetilde F$
in \eqref{Fw_def} it follows that $J^{\widetilde F}(z)$ is a lower
triangular matrix since $\widetilde F_k$ depends on $z_1,\ldots,z_k$
only.

Next we consider the diagonal elements of $J^{\widetilde F}(z)$.
Using  \eqref{f_margin_esti1} we find constants $\underline C$ and
$\overline C$ such that $\underline C \le f_{Z_1\ldots
Z_k}(z_1,\ldots,z_k,p) \le \overline C$ for all $k=1\ldots,\kappa$.
Then it holds with $\delta:= \min\{\underline C,\,{\underline
C}\,/\,{\overline C}\}$
\begin{eqnarray*}
\frac{\partial}{\partial z_1} \widetilde F_1(z,p) &=&
 f_{Z_1}(z_1,p) \ge \delta \quad \text{and}\\
\frac{\partial}{\partial z_k} \widetilde F_k(z,p) &=&
f_{Z_k|Z_1\ldots Z_{k-1}}(z_k|z_1,\ldots z_{k-1},p) =
\frac{f_{Z_1\ldots Z_k}(z_1,\ldots,z_k,p) }{f_{Z_1\ldots
Z_{k-1}}(z_1,\ldots,z_{k-1},p) } \ge  \delta,
\end{eqnarray*}
for $k=2,\ldots,\kappa$. Since $J^{\widetilde F}(z)$ is triangular,
its determinant is
\[\det(J^{\widetilde F}(z) ) = \prod\limits_{k=1}^\kappa
\frac{\partial}{\partial z_k} \widetilde F_k(z,p) \ge \delta^\kappa
>0,\]
hence $J^{\widetilde F}(z)$ is invertible.

Next we show that the the non-zero off-diagonal elements of
$J^{\widetilde F}$ are bounded. It holds for $k=2,\ldots,\kappa,\;
j=1,\ldots, k-1$
\begin{eqnarray*}
 \frac{\partial}{\partial z_j} \widetilde F_k(z,p) &=&
\int_{a_k}^{z_k}  \frac{\partial}{\partial z_j} f_{Z_k|Z_1\ldots Z_{k-1}}(s_k|z_1,\ldots z_{k-1},p) d s_k\\
&= & \int_{a_k}^{z_k}  \frac{\partial}{\partial z_j}
\frac{f_{Z_1\ldots Z_k}(z_1,\ldots,z_{k-1}, s_k,p) }{f_{Z_1\ldots
Z_{k-1}}(z_1,\ldots,z_{k-1},p)
} d s_k\\
&\le & \int_{a_k}^{z_k} \frac{1}{f^2_{Z_1\ldots Z_{k-1}}(\cdot)}
\Big( ~\Big|\frac{\partial}{\partial z_j} f_{Z_1\ldots
Z_k}(\cdot)\Big| f_{Z_1\ldots Z_{k-1}}(\cdot)
\\
&& \hspace*{25mm}+    f_{Z_1\ldots Z_k}(\cdot)
\Big|\frac{\partial}{\partial z_j} f_{Z_1\ldots Z_{k-1}}(\cdot)
\Big|~\Big) ds_k\le  C.
\end{eqnarray*}
Here again we have used that
 the marginal densities are bounded from above and bounded away
from zero, and \eqref{f_margin_esti3} for the boundedness of the
derivatives of the marginal densities w.r.t.~$z_j$.

For proving the boundedness  of ${\partial}/{\partial u_i} G_l(u,p)$
in \eqref{G_deriv} which are the entries of the Jacobian matrix
$J^G(u)$ we use that $J^G$ is the inverse of $J^{\widetilde F}$.
Since $J^{\widetilde F}$ ist a triangular matrix the entries of
$J^G$ can be computed recursively by Gaussian elimination starting
with the first row. This gives that  for $k,l=1,\ldots,\kappa$
\[J^G_{kl} = \frac{1}{J^{\widetilde F}_{kk}}\Big( \delta_{kl} -
\sum\limits_{j=1}^{k-1} J^{\widetilde F}_{kj}\; J^G_{jl}
 \Big),\]
i.e.~the entry $J^G_{kl}$ can be represented by an affine linear
combination of the bounded off-diagonal entries in row $k$ of
$J^{\widetilde F}$ divided by $J^{\widetilde F}_{kk}$. The latter is
strictly positive and bounded from below by $\delta>0$. Hence, all
entries of $J^G$ are bounded.

\end{proof}
\section{Proof of Proposition \ref{estimates}}
\label{proof_estimates}
\begin{proof}
We give the proof for $k=2$. The assertions for $k\in[0,2]$ follow
from H\"older inequality. We denote by $C$ a generic constant.

\paragraph{Proof of inequality (\ref{esti1}): $E_{}(|\pr_{\tau}^{(t,\pr,h)}|^2)\leq
C(1+\norm{\pr}^2)$} \ \\
 We recall the state equation
\begin{equation}
\label{dim-filter_h}
\begin{split}
d\pr_t &=  \alphar(\pr_t,h_t)dt+\betar^{\top}(\pr_t)dB_t
+\int_{\mathcal{U}}\gammar(\pr_t,u)\widetilde{N}(dt\times du)
\end{split}
\end{equation}
and for the sake of shorter notation we denote  by
$\pr_\tau=\pr_{\tau}^{(t,\pr,h)}$ the solution of equation
$(\ref{dim-filter_h})$ starting from $\pr$ at time $t$ using strategy
$h$ for $\tau\ge t$.
 Then it holds
\begin{eqnarray*}
|\pr_{\tau}^{}|^2 &\leq & C\Big(\norm{\pr}^2+\Big|\int_t^{\tau}
\alphar(\pr_s,h_s)ds\Big|^2
+\Big|\int_t^{\tau}\betar(\pr_s)dB_s\Big|^2\\
&&\hspace*{45mm}+\Big|\int_t^{\tau}\int_{\mathcal{U}}\gammar(\pr_s,u)\widetilde{N}(ds\times
du)\Big|^2 \Big)
\end{eqnarray*}
Taking expectation and using It\^{o}-Levy isometry
 implies
\begin{eqnarray*}
E_{}(|\pr_{\tau}^{}|^2) &\leq &
C\Big(\norm{\pr}^2+E_{}\Big(\int_t^{\tau}| \alphar(\pr_s,h_s)|^2 ds\Big) + E_{}\Big(\int_t^{\tau} tr(\betar\tr(\pr_s) \betar(\pr_s)) ds\Big)\\
&&\hspace*{45mm} +
E_{}\Big(\int_t^{\tau}\int_{\mathcal{U}}|\gammar(\pr_s,u)|^2\nu(du)
ds)\Big) \Big).
\end{eqnarray*}

 We now use the linear growth of $\alphar$, $\betar$  and
$\gammar$ and the integrability property for $\rho$ (see Assumption
\ref{coef-model})  to obtain
\begin{eqnarray}
\nonumber E_{}(|\pr_{\tau}^{}|^2)&\leq& C\Big\{\norm{\pr}^2
+E_{}\Big(\int_t^{\tau}(1+|\pr_{s}^{}|^2)ds\Big)\Big\}
\le
 C\Big\{\norm{\pr}^2+E_{}(\tau) +E_{}\Big(\int_t^{\tau}|\pr_{s}^{}|^2ds\Big)\Big\}\\
&\leq&C\Big\{\norm{\pr}^2 +1+
E_{}\Big(\int_t^{\tau}|\pr_{s}^{}|^2ds\Big)\Big\}\label{ine-0}.
\end{eqnarray}
For any deterministic time $\tau=u$ Fubini's Theorem gives
\[E_{}(|\pr_{u}^{}|^2) \leq C\Big\{\norm{\pr}^2 +1+ \int_t^{u}E_{}(|\pr_{s}^{}|^2)ds\Big\}\]
and applying Gronwall's Lemma  to $G_u:=E_{}(|\pr_{u}^{}|^2)$ implies
\[
E_{}(|\pr_{u}^{}|^2)\leq C(\norm{\pr}^2+1) e^{C(u-t)} \le
C(\norm{\pr}^2+1).
\]
Finally, we note, that for any stopping time $\tau\in [t,T\wedge
t+\delta]$ it holds
\[E_{}\Big(\int_t^{\tau}|\pr_{s}^{}|^2ds\Big) \le \int_t^{t+\delta} E_{}(|\pr_{s}^{}|^2) ds \le C(1+\norm{\pr}^2).\]
Substituting the upper estimate back into $(\ref{ine-0})$ proves the
assertion.

\paragraph{Proof of inequality (\ref{esti2}): $E_{}(|\pr_{\tau}^{(t,\pr,h)}-\pr|^2) \leq C(1+\norm{\pr}^2)\delta$} \ \\
 The process $(\pr_{\tau}^{}-\pr)$ starts from $0$ and hence the
computations for $\pr_{\tau}^{}$ in the above proof inequality
\eqref{esti2}  give for $\tau\in [t,T\wedge t+\delta]$
\begin{eqnarray*}
E_{}(|\pr_{\tau}^{}-\pr|^2) &\leq& C\int_t^{\tau}(1+E_{}
(|\pr_{s}^{}|^2))ds
\leq C\int_t^{t+\delta}(1+E_{}(|\pr_{s}^{}|^2))ds\\
&\leq&  C\int_t^{t+\delta}(1+C(1+(|\pr|^2))ds \leq
C(1+\norm{\pr}^2)\delta.
\end{eqnarray*}

\paragraph{Proof of inequality (\ref{esti3}):
$E_{}\Big(\Big\{\sup_{t\leq s\leq
t+\delta}|\pr_{s}^{(t,\pr,h)}-\pr|\Big\}^2\Big) \leq
C(1+\norm{\pr}^2)\delta$}\ \\
We give the proof for $t=0$ from which the claim for general $t$
follows immediately. Using the corresponding representation as
stochastic integrals for the solution of equation
(\ref{dim-filter_h}) we find
\begin{eqnarray*}
\pr_s-\pr &=& A_s + M_s \qquad \text{where}\\
 A_s &=& \int_0^{s} \alphar(\pr_r,h_r)dr
~~\text{and}~~ M_s = \int_0^{s} \betar\tr(\pr_r)dB_r + \int_0^{s}
\gammar(\pr_r,u) {\widetilde N}(dr\times du).
\end{eqnarray*}
Then it holds
\begin{eqnarray}
\nonumber E_{}\big(\big\{\sup_{0\leq s\leq
\delta}|\pr_s-\pr|\big\}^2\big) &=& E_{}\big(\big\{\sup_{0\leq s\leq \delta}|A_s + M_s|\big\}^2\big)\\
\label{estim1} &\le & 2 E_{}\big(\sup_{0\leq s\leq \delta}|A_s|^2
\big) +   2 E_{}\big(\sup_{0\leq s\leq \delta}|M_s|^2 \big).
\end{eqnarray}
For the first term on the r.h.s.~we find by applying Cauchy-Schwarz
inequality and the growth condition (\ref{growth_cond}) for $\alphar$
\begin{eqnarray*}
\sup_{0\leq s\leq \delta}|A_s|^2& = & \sup_{0\leq s\leq \delta}
\norm{ \int_0^s \alphar(\pr_r,h_r) dr}^2
\le \sup_{0\leq s\leq \delta} s  \int_0^s \norm{\alphar(\pr_r,h_r)}^2 dr.\\
&\le &  \delta  \int_0^\delta C(1+\norm{\pr_r}^2) dr.
\end{eqnarray*}
Taking expectation and applying estimate (\ref{esti1}) we find
\begin{equation}
\label{estim2} E\big(\sup_{0\leq s\leq \delta}|A_s|^2\big) \le
\delta \int_0^\delta C(1+\norm{\pr}^2) dr \le \delta
C(1+\norm{\pr}^2).
\end{equation}
For the second term on the r.h.s. of (\ref{estim1}) Doob's
inequality for martingales and It\^{o}-Levy isometry yields
\begin{eqnarray*}
\nonumber
 E\Big( \sup_{0\leq s\leq \delta}\norm{M_s }^2\Big) \le  4
E(\norm{M_\delta}^2) &=& 4  \Big( \int_0^\delta
E\big(tr[\betar\tr(\pr_r)\betar(\pr_r)] \big)dr \\
&&\hspace*{5mm}+  \int_0^\delta \int_{\mathcal{U}}
E\big(\norm{\gammar(\pr_r,u)}^2\big)\nu(du) dr \Big).
\end{eqnarray*}
Applying the growth conditions (\ref{growth-delta}),
(\ref{growth_cond})   and estimate (\ref{esti1}) it yields
\begin{eqnarray}
\nonumber
 E\Big( \sup_{0\leq s\leq \delta}\norm{M_s }^2\Big) &\le&
 C  \Big( \int_0^\delta E(1+\norm{\pr_r}^2)dr +
 \int_0^\delta \int_{\mathcal{U}}
\rho^2(u) E(1+\norm{\pr_r}^2)\nu(du) dr \Big)\\
\label{estim3} &\le & C(1+\norm{\pr}^2) \int_0^\delta \!\!\Big( 1+
\!\!\int_{\mathcal{U}} \rho^2(u)\nu(du) \Big)dr \le
C\delta(1+\norm{\pr}^2).
\end{eqnarray}
Substituting (\ref{estim2}) and (\ref{estim3}) into (\ref{estim1})
yields the assertion.

\paragraph{Proof of inequality (\ref{esti4}):
$E_{}(|\pr_{\tau}^{(t,\pr,h)}-\pr_{\tau}^{(t,\xi,h)}|^2) \leq (\pr-\xi)^2$}\ \\
For the sake of shorter notation we write $\pr_s=\pr_s^{(t,\pr,h)}$ and
$\xi_s=\pr_s^{(t,\xi,h)}$ and we set
$\Delta{\alphar}(\pr,\xi,h)=\alphar(\pr,h) -\alphar(\xi,h),
~\Delta{\betar}(\pr,\xi)=\betar(\pr)-\betar(\xi)~  \text{ and }
\Delta{\gammar}(\pr,\xi)=\gammar(\pr,u)-\gammar(\xi,u).$ Then,
\begin{eqnarray*}
Y_\tau&:=&\pr_{\tau}-\xi_{\tau}=\pr-\xi+\int_t^{\tau}\Delta{\alphar}(\pr_{s},\xi_{s}^{},h_s)ds+
\int_t^{\tau}\Delta{\betar}^{\top}(\pr_{s}^{},\xi_{s}^{})dB_s
\\&&\hspace*{40mm}+\int_t^{\tau}\int_{\mathcal{U}}\Delta{\gammar}(\pr_{s}^{},\xi_{s}^{},u)\widetilde{N}(ds\times
du).
\end{eqnarray*}
Applying It\^{o}'s lemma to $Y_s^2$ and using It\^{o}-Levy isometry
  we obtain
\begin{eqnarray*}
E_{}(|Y_{\tau}|^2)\!\!&=& \!\!|\pr-\xi|^2+E\Big(\! \int_t^{\tau}\Big\{2
Y_s^{\top}\Delta{\alphar}(\pr_{s}^{},\xi_{s}^{},h_s)
+tr\Big(\Delta{\betar}(\pr_{s}^{},\xi_{s}^{})\Delta{\betar}\tr(\pr_{s}^{},\xi_{s}^{})\Big)\\
&&\hspace*{53mm}+\int_{\mathcal{U}}|\Delta
\gammar(\pr_{s}^{},\xi_{s}^{},u)|^2 \nu(du)\Big\}ds \Big).
\end{eqnarray*}
Hence we obtain from the Lipschitz continuity of $\alphar, \betar,
\gammar$ given in  Assumption $\ref{coef-model}$
\[
E_{}(|Y_{\tau}|^2)\leq |\pr-\xi|^2+CE_{}\Big(\int_t^{\tau}|Y_s|^2ds\Big)
.
\]
For any deterministic time $\tau=u$ Fubini's Theorem gives
\[
E_{}(|Y_{u}|^2)\leq |\pr-\xi|^2+CE_{}\Big(\int_t^{u}|Y_s|^2ds\Big)
\]
and applying Gronwall's Lemma  to $G_u:=E_{}(|Y_{u}|^2)$ implies
\[
E_{}(|Y_{u}|^2)\leq \norm{\pr-\xi}^2 e^{C(u-t)} \le  C\norm{\pr-\xi}^2.
\]
Finally, we note, that for any stopping time $\tau\in [t,T\wedge
t+\delta]$ it holds
\[E_{}(|Y_{\tau}|^2)\leq |\pr-\xi|^2+CE_{}\Big(\int_t^{t+\delta}|Y_s|^2ds\Big) \le  C\norm{\pr-\xi}^2. \]
\end{proof}

\section{Proof of Proposition \ref{prop_continuity}}
\label{proof_prop_continuity}
\begin{proof}
\paragraph{Boundedness of $V$}
We recall that $V(t,\pr)=\sup_{h\in\mathcal{H}}v(t,\pr,h)$ where
\begin{eqnarray*}
 v(t,\pr,h)&=&E_{}\Big(\exp\Big\{\int_t^{T} -b(R\pr_s^{(t,\pr,h)},h_s) ds \Big\}\Big)\\
\text{with} \quad
  b(p,h) &=& -\theta\Big(h^{\top} Mp - \frac{1-\theta}{2} \norm{\sigma^{\top} h
  }^2\Big),
\end{eqnarray*}
and $\pr_s^{(t,\pr,h)}$ is the solution of the SDE (\ref{filter_N}) with
initial value $\pr_t=\pr$.

The function  $b$  is bounded, since it is  continuous and  $\pr\in
\simplexr$ and $h\in\compactset$ take values in compact sets, i.e.
$|b(R\pr,h)|\le C_b$ with some constant $C_b>0$.
 Hence $0\le v(t,\pr,h)\leq e^{C_b(T-t)}\leq e^{C_bT}$ for
all $h\in\mathcal{H}$ which implies that $0\le V(t,\pr)\leq e^{C_bT}$.

Note, that since the value function $V$ is bounded, it also
satisfies the \textsl{linear growth
condition} $V(t,\pr)\leq C(1+\norm{\pr})$ since $\norm{\pr}_\infty\leq 1$. \\
\paragraph{Lipschitz continuity in $\pr$} \
The reward function can be written  as\\
$v(t,\pr,h)=E ( e^{J(\pr)})$ where $J(\pr):= \int_t^{T}
-b(R\pr_s^{(t,\pr,h)},h_s)ds$. It holds for $\theta\in(0,1)$
\begin{eqnarray}
\nonumber |V(t,\pr)-V(t,\xi)| &=&\big|\sup_{h\in \mathcal{H}} E
(e^{J(\pr)}) - \sup_{h\in \mathcal{H}} E
 (e^{J(\xi)})\big| \le  \sup_{h\in \mathcal{H}} \big| E (e^{J(\pr)} - e^{J(\xi)}) \big|\\
\label{lip1} &\le & \sup_{h\in \mathcal{H}}  E (|e^{J(\pr)} -
e^{J(\xi)}|) \le  \sup_{h\in \mathcal{H}}  C E (|J(\pr) - J(\xi)|) ,
\end{eqnarray}
where we used Lipschitz continuitiy of $f(x)=e^x$ on bounded
intervals and the boundedness of $J(\pr)$ which follows, since
 $b$ is bounded.
For $\theta<0$ we use $V(t,\pr)=\inf_{h\in \mathcal{H}}E (e^{J(\pr)}) =
\sup_{h\in \mathcal{H}} -E (e^{J(\pr)})$ and apply  analogous
estimates.

Using that $b$ is linear in $\pr$ and that $h_t\in \compactset$ is
uniformly bounded we derive
\begin{eqnarray}
\nonumber
 E(|J(\pr) -J(\xi)|) &= & E\Big(\Big|\int_t^{T}\big[ b(R\pr_s^{(t,\xi,h)},h_s)-b(R\pr_s^{(t,\pr,h)},h_s)\big] ds \Big|\Big)\\
\nonumber
 &\le &  \int_t^{T} C\, E( | \pr_s^{(t,\pr,h)}- \pr_s^{(t,\xi,h)}| )\, ds\\
 \label{c3}
&\le & \!\! C\! \int_t^{T}\!\! \norm{\pr-\xi}^2 \, ds  \le C(T-t) \norm{\pr-\xi}^2 \le
C \norm{\pr-\xi},
\end{eqnarray}
for every $h\in\mathcal{H}$, where we used estimate (\ref{esti4}),
 $\norm{\pr-\xi}\le C\norm{\pr-\xi}_\infty$ and $ \norm{\pr-\xi}_\infty \le 1$.
Plugging the above estimate into (\ref{lip1}) it follows $
|V(t,\pr)-V(t,\xi)| \le C\norm{\pr-\xi}$, which proves the Lipschitz
continuity of $V(t,\pr)$ in $\pr$.

\paragraph{Continuity in $t$}
Let $0\leq t<s\leq T$, then the dynamic programming principle to
$V(t,\pr)$ implies
\begin{eqnarray*}
0&\leq& |V(t,\pr)-V(s,\pr)|\\
&=&\sup_{h\in
\mathcal{H}}E\Big(\exp\Big\{-\int_t^{s}b(R\pr^{(t,\pr,h)}_u,h_u)du\Big\}\,V(s,\pr_{s}^{(t,\pr,h)})-V(s,\pr)\Big)\\
&\leq&\sup_{h\in
\mathcal{H}}E\Big(\exp\Big\{-\int_t^{s}b(R\pr^{(t,\pr,h)}_u,h_u)du\Big\}\,\Big|V(s,\pr_{s}^{(t,\pr,h)})-V(s,\pr)\Big|\Big)\\
&+&\sup_{h\in
\mathcal{H}}E\Big(\Big|\exp\Big\{-\int_t^{s}b(R\pr^{(t,\pr,h)}_u,h_u)du\Big\}\,V(s,\pr)-V(s,\pr)\Big|\Big).
\end{eqnarray*}
Using the Lipschitz continuity of $V$ in $\pr$ the first term can be
estimated by
\[C \sup_{h\in \mathcal{H}}E\Big(|\pr_{s}^{(t,\pr,h)}-\pr|\Big) \le C|s-t|^{\frac{1}{2}}\]
where we have used  \eqref{esti3}. For the second term the
boundedness of $b$ and $V$ yields the estimate
\[|e^{C_b(s-t)}-1|V(s,\pr) \le C|s-t|\]
where we have used that $f(x)=e^x  $ is  Lipschitz continuous on
bounded intervals. Finally, we obtain
\begin{eqnarray*}
|V(t,\pr)-V(s,\pr)|& \leq&C(|s-t|^{\frac{1}{2}}+|s-t|) \le
(C+T^{\frac{1}{2}})|s-t|^{\frac{1}{2}}.
\end{eqnarray*}
\end{proof}

\end{document}